\newtheorem{result}{Result}
\DeclareMathOperator{\ennumerate}{Enumerate}
\DeclareMathOperator{\emptyword}{\varepsilon}
\DeclareMathOperator{\PMN}{WalkMaxNode}
\title{Enumerating $m$-Length Walks in Directed Graphs with Constant Delay}
\title{Enumerating $m$-Length Walks in Directed Graphs with Constant Delay}
\author{Duncan Adamson \inst{1} \and Pawe\l{}  Gawrychowski\inst{2} \and Florin Manea \inst{3}}
\institute{ Materials Innovation Factory, University of Liverpool, UK \and Institute of Computer Science, University of Wrocław, Poland \and Department of Computer Science, University of Göttingen, Göttingen, Germany}
\authorrunning{Adamson, Gawrychowski, Manea}
\begin{document}

\maketitle

\begin{abstract}
    In this paper, we provide a novel enumeration algorithm for the set of all walks of a given length within a directed graph.
    Our algorithm has worst-case constant delay between outputting succinct representations of such walks, after a preprocessing step requiring linear time relative to the size of the graph.
    We apply these results to the problem of enumerating succinct representations of the strings of a given length from a prefix-closed regular language (languages accepted by a finite automaton which has final states only).\looseness=-1
\end{abstract}

\section{Introduction}

Enumerating all members of a given class of combinatorial objects is one of the fundamental problems in computer science. Enumeration problems take a description of the class of objects and produce every object satisfying this description. Often, the number of objects in each class is of exponential size relative to the size of the description: for example, the set of walks of length $m$ in the complete graph $K_n$ with $n$ vertices has size $n^m$. Due to the large size of these classes, the usual goal of enumeration algorithms is to reduce the delay between outputting consecutive objects, either in terms of the worst case or the average case. Various enumeration problems appear in diverse contexts with a wide range of applications. Comprehensive surveys of enumeration problems and their connections to various areas of computer science and mathematics have been provided by Segoufin \cite{segoufin2013enumerating} (with a focus on logic), Wasa \cite{wasa2016enumeration} (which provides a list of enumeration problems from multiple areas, ranging from graph theory to computational geometry or to combinatorics on words and automata), and Uno~\cite{Uno16} (focused on the amortized analysis of enumeration algorithms). Interesting applications of enumeration algorithms include, among others, database theory \cite{amarilli2021constant,amarilli2022efficient,SchmidS21,SchmidS22}, combinatorics and algorithms on strings and the study of formal languages \cite{ackerman2009three,ackerman2009efficient,amarilli2022efficient,Gruber0S21,Shallit13}.\looseness=-1

This paper is primarily motivated by the problem of enumerating the set of all \emph{crystal structures} of a given size.
This problem originates in chemistry, with the problem of \emph{crystal structure prediction}. In one dimension, the crystal structure prediction problem asks, given an alphabet of ``blocks'' (3-dimensional structures), what is the optimal way to arrange these objects to minimise some pairwise objective function~\cite{collins2017accelerated}. Currently, this problem is solved via heuristic techniques \cite{Oganov2018}, leaving the possibility of missing the optimal solution in numerous instances, or by costly integer programming-based algorithms \cite{gusev2023optimality} that often correspond to checking every possible solution. At the same time, existing knowledge from chemistry allows certain solutions to be ruled out without the costly process of simulating the predicted structure~\cite{Oganov2018}, simply based on the presence of some bad (or forbidden) block combinations. By considering the set of blocks as an alphabet, we can canonically represent the set of various arrangements of blocks as strings over this alphabet. In this way, the set of valid arrangements of blocks forms as a \emph{prefix-closed} regular language, i.e., a set of strings where every prefix of every string is also in the set. The class of prefix-closed regular languages has nice language theoretic properties \cite{CevorovaJMPS14,KaoRS09}, and, interestingly w.r.t. our motivation, includes the class of languages of strings that avoid a given set of forbidden factors (corresponding to the bad combinations of blocks). Thus, in this framework, we may solve the problem of enumerating the set of valid crystal structures, for a given set of blocks, by solving the problem of enumerating all strings in a given prefix-closed regular language.

Rather than restricting ourselves purely to prefix-closed regular languages (or crystal structures), and since deterministic finite automata (which are natural ways to specify such languages) can be canonically represented as directed, labelled multi-graphs, we look at the problem of enumerating (succinct representations of) all walks of a given length in a directed graph (potentially parameterized by their starting vertex or set of starting vertices). This seems to lift the language-enumeration problem discussed above to a more abstract setting. Therefore, we will first consider Problem \ref{prob:enum}, and then see how we can use the obtained results to solve the more concrete problem introduced above. \looseness=-1

\begin{problem}
    \label{prob:enum}
    Given directed graph $G$, and integer $m>0$, enumerate efficiently succinct representations of all walks of length $m$ in $G$.\looseness=-1
\end{problem}

To fully specify this problem, one needs to define exactly what is the output of the enumeration. One possibility would be to output each walk of length $m$ explicitly (as a sequence of edges); this would inherently lead to $O(m)$-time delay between the consecutive walks (as we first need to finish outputting the first walk, before starting the next one). To achieve $O(1)$-delay, an implicit, succinct representation of the output is needed. However, such a representation has to be meaningful: one should be able to canonically and efficiently retrieve the explicit list of enumerated walks from the list of implicitly represented walks. Ideally, an algorithm solving Problem \ref{prob:enum} would also permit explicitly outputting, on demand, the current walk of the enumeration, at any step of the computation. Also, ideally, the preprocessing done by the algorithm would take linear time in the size of $G$ and would not depend on $m$, allowing the constructed data structures to be reused to enumerate walks of other lengths. Worth noting, straightforward implicit representations of the enumerated walks (e.g., outputting in step $i$ of the enumeration the implicit description ``the $i^{th}$ walk of $G$ in lexicographic order (as induced by a total order on edges/vertices)'') are usually not meaningful (or do not fulfil our other requirements), as obtaining the walks explicitly would usually require non-trivial work (in the enumeration or in the preprocessing phase).\looseness=-1

\paragraph{Our Contributions.} 
We solve Problem \ref{prob:enum} by providing an enumeration algorithm with worst-case constant delay after a preprocessing phase running in linear time in the total size of the graph $G$, while being independent of $m$. \looseness=-1

Our {\em first contribution}, essential in achieving this result, is to introduce the notion of {\em default walks} in the graph $G$. More precisely, each vertex of $G$ is associated with a \emph{default edge}, corresponding to the first edge on the longest (possibly infinite) walk in the graph starting at that vertex. These \emph{default edges} allow the definition of \emph{default walks} in the graph (walks consisting only of default edges). In this framework, arbitrary walks in the graph can be represented implicitly as the concatenation of multiple default walks (specified by starting vertex and length) and the non-default edges which connect these walks.\looseness=-1

Our {\em second contribution}, is an enumeration algorithm for the walks of length $m$ within a graph $G$, which crucially uses the notions of default edges and walks both for the sake of efficiency of enumeration and as the basis for the output of the enumerated walks. The main idea behind this algorithm is to maintain internally, while going through the walks we want to enumerate, the implicit representation of the current walk as a list of default walks in the graph. By augmenting the graph of default edges with a series of non-trivial data structures, the implicit representation of the enumerated walks allows us to efficiently compute the representation of the next walk in the enumeration and present it in a succinct (yet, meaningful, in the sense mentioned above) way, by simply referencing (by length) the ``prefix'' shared by the current walk and the next one, and then extending it by appending a non-default edge and a single, succinctly represented default walk. This implicit representation also allows us to output explicitly, on-demand, at every point of the computation, the current walk in linear time w.r.t. its length. Finally, we show that only constant time is needed in our enumeration algorithms to move between consecutive walks and output their representations. Our main results are, as such, efficient solutions to Problem \ref{prob:enum}. \looseness=-1
\begin{result}[Theorem \ref{thm:any_starting_vertex}] 
\label{res:enum} 
    Given directed graph $G = (V, E)$, with $n$ vertices, and integer $m$, we can enumerate succinct representations of all walks of length $m$ in $G$, with $O(1)$-delay, after $O(\vert E \vert)$-time preprocessing.
\end{result}

The above result can be extended to enumerate with $O(1)$-delay all the walks in a graph, whose length is between two given integers $\ell$ and $m$. More interestingly, these results can be directly applied to languages accepted by a prefix closed automaton (PCA for short, a deterministic finite automaton with final states only), by noting that there is a bijective correspondence between the walks starting in the initial state of a PCA and the strings of the language.

\begin{result}[Theorem \ref{thm:enum_PCA}]
    Given integer $m$ and PCA $\mathcal{A}$, we can enumerate succinct representations of all strings of length $m$ accepted by ${\mathcal{A}}$, with $O(1)$ delay after a linear time preprocessing w.r.t. the size of ${\mathcal{A}}$.
\end{result}



We also extend our results to obtain efficient algorithms for the problems of ranking and unranking strings in prefix closed regular languages w.r.t. the order in which they are output by the enumeration algorithm from Theorem \ref{thm:enum_PCA}. 

\begin{result}[Theorem \ref{thm:ranking}]
    \label{res:rank} 
    Given PCA $A$, with $n$ states, accepting the prefix-closed regular language $L(A)$ over the alphabet $\Sigma$ with $\sigma$ letters, and string $w \in L(A)$, with $\vert w \vert = m$, we can compute in polynomial time the number of strings of length $m$ output in our enumeration of $L(A)$ before outputting $w$. 
    Moreover, given integer $i$, we can compute in polynomial time (relative to the output) the $i^{th}$ string of length $m$ output in our enumeration algorithm of $L(A)$. 
\end{result}

\paragraph{Related Work.}
The problem of enumerating walks and paths (i.e., walks with no repeated vertex) within graphs is highly studied with a wealth of existing results. Wasa lists a series of such enumeration tasks and the complexity of their solutions in \cite{wasa2016enumeration}. A notable work regarding the enumeration of paths within a graph is the backtracking technique by Read and Tarjan \cite{read1975bounds}, proving a delay between outputting paths of at most $O(\vert E \vert )$. We note that in the general case, where each path must be explicitly output and there is no upper bound on the length of the path, this is optimal. However, when the length of the paths is bounded by some $m < \vert E \vert$, the potentially significant cost of backtracking limits the efficiency of this algorithm. 
Other initial work on enumerating paths focused on matrix-based approaches, requiring exponential time and space for precomputation, without intermediary output. Danielson \cite{danielson1968finding} (later strengthened by Rubin \cite{Rubin1978Enumeration}) provided such an approach for enumerating all simple paths (paths that do not visit any vertex more than once). Kamae \cite{kamae1967systematic} used this approach to output all cycles and paths in a directed graph; their approach was strengthened by Mateti and Deo \cite{Mateti1976Circuits}, Wild \cite{wild2008generating}, and Birmel{\'e} et al. \cite{birmele2013optimal}. We note that these algorithms are not focused, as ours are, on outputting (succinct representations of) each path (or cycle) sequentially with minimal delay, but rather on collecting all the paths, and thus optimise the time taken to compute the paths efficiently, rather than focusing on the output. Additional work has focused on specific classes of paths, such as Hamiltonian paths \cite{yau1967generation}, ST-paths \cite{birmele2013optimal,Grossi2018EnumeratingSTPaths}, and chordless paths \cite{Uno2014Chordless}.\looseness=-1

Regarding the application of our results on walk-enumeration to the enumeration of strings from prefix closed languages, we also recall the rich literature regarding the enumeration of strings. We again point to the survey of Wasa for a series of classical results \cite{wasa2016enumeration}, as well as to the surveys by Gruber et al. \cite{Gruber0S21} and Shallit \cite{Shallit13}. Several works have focused on more general classes of languages, at the cost of allowing enumeration with non-constant delay, relative to the length of the strings~\cite{ackerman2009three,ackerman2009efficient,SchmidS21,SchmidS22}. Of particular interest to us is the recent work by Amarilli and Monet~\cite{amarilli2022enumerating}, who have provided an algorithm for outputting all strings recognised by a given regular language with bounded delay (which depends on the size of the automaton). This is achieved by partitioning the input language into  \emph{orderable regular languages}, i.e., a language whose strings can be ordered (in a potentially infinite sequence) in such a way that the edit distance between the $i^{th}$ and the $i + 1^{th}$ members of the sequence is bounded. In the other direction, there have been several algorithms for enumerating strings within specific subclasses of prefix closed languages. Quite close to our work are the results of Ruskey and Sawada from \cite{RuskeyS00}, where they give a relatively straightforward algorithm for enumerating the strings of length $n$, over an alphabet with $\sigma$ letters, which do not contain a given factor $f$ of length $m$ with constant amortized delay, after a preprocessing taking $O(m\sigma)$ time. We note that our solution outperforms the one from~\cite{RuskeyS00} for this specific class by reducing the preprocessing time to $O(m)$, and the delay to constant in the worst-case. Similarly, constant amortised delay enumeration algorithms have since been provided for a large number of classes of cyclic strings, 
most relevantly necklaces and bracelets with a forbidden factor \cite{RuskeyS00,Sawada2001}.\looseness=-1

\section{Preliminaries: Definitions and Sketch of the Algorithm}


The computational model we use in this paper is the RAM with logarithmic word size relative to the size of the input graph or automaton (see below for full details). 

Let $\mathbb{N} = \{1, 2, \ldots\}$ be the set of strictly positive integers and let $[n] = \{1, \ldots, n\}$ for $n \in \mathbb{N}$. Let $G=(V,U)$ be a directed graph (multi-graph) with the set of vertices $V$ and the set of edges $U \subseteq V \times V$ (respectively, $U$ is a multiset of pairs from $V\times V$); the direction of an edge $(v,u)$ is from $v$ to $u$. The directed (multi-)graph $G$ is labelled, with labels over an alphabet $\Sigma = [\sigma]$, if there exists a function ${\mathcal{L}}: U \rightarrow \Sigma$ which labels each edge of $G$ with a letter from $\Sigma$. An example relevant to our paper is that of finite automata, which are directed multi-graphs whose edges, called transitions in that context, are labelled by letters from an input alphabet. We assume that the sets $\Sigma$ and $V$ are totally ordered. \looseness=-1

A walk of length $k$ in $G$ is a sequence $\pi=((v_1,v_2),(v_2,v_3),\ldots,(v_{k-1},v_k))$ such that $(v_i,v_{i+1})\in U$, for all $i\in [k-1]$; the length $k$ of $\pi$ is denoted by $|\pi |$, and $v_1,\ldots,v_k$ are the vertices on the walk $\pi$. Given a walk $\pi $, we refer to its first (respectively, last) $\ell$ edges as the \emph{prefix} (respectively, \emph{suffix}) of length $\ell$ of $\pi$.\looseness=-1

In this paper, we first develop an algorithm for the enumeration of walks of length $m$ within a directed graph $G$, and then apply these results to enumerate strings accepted by a specific class of deterministic finite automaton, called prefix closed automata (PCA, for short). For space reasons, we refer the reader to \cite{HopcroftUllman} for definitions regarding strings and automata.
\looseness=-1

\subsection{Computational Model}\label{app:model}

In the problems we consider, the input consists in a natural number $m$, a directed graph with $n$ vertices, and $e = \vert E \vert$ edges. We assume that $e \geq n$ and let $N=\max\{e,m\}$.
The computational model we use is the RAM with logarithmic word size. More precisely, we assume that each memory word can hold ${\mathtt w}\in \Theta( \log N)$ bits and arithmetic operations with numbers in $[N]$ take $O(1)$ time (in particular, we assume that working with numbers upper bounded by $m$ can be done in $O(1)$ time - this includes both arithmetic and input/output operations). Numbers larger than $N$, with $\ell $ bits, are represented in $O(\ell/ \mathtt{w})$ memory words, and working with them takes time proportional to the number of memory words on which they are represented. In the automata-related applications of our graph algorithm, the strings processed in our algorithms are seen as sequences of integers, each fitting in one memory word.

The model described above is common in enumeration algorithms, see~\cite{wasa2016enumeration} and the references therein, especially those addressing problems where the input is a number $m$ and enumerating, generating, ranking, or unranking structures (such as strings, graphs, trees) of size $m$ fulfilling certain properties is required. Similarly, this model is usual in string algorithms, see~\cite{crochemore}.

We can assume that in our computational model, as in many modern programming languages, tail calls to a recursive function are implemented without adding a new stack frame to the call stack. Note that this assumption, as well as the usage of recursivity in our algorithm, is simply aimed to make the presentation lighter. In fact, the algorithms we describe here can be implemented iteratively (although in a more tedious manner, which makes use of the stacks maintained by that algorithm to simulate the recursive calls), and, in that setting, there is no need to manage tail recursion; their complexity remains unchanged. 

\subsection{Strings and Automata: Definitions} \label{app:defs}

Let $\Sigma =[\sigma]$ be a totally, strictly ordered alphabet, with $\sigma$ letters, namely $1 < 2 < \ldots < \sigma$. By $\Sigma^+$ we denote the set of non-empty strings (words) over $\Sigma$ and $\Sigma^* = \Sigma^+ \cup \{\emptyword\}$ (where $\emptyword$ is the empty string). For a string $w \in \Sigma^*$, we denote by $\vert w\vert$ its length (with $\vert\emptyword\vert = 0$) and by $\Sigma^n$ the set of strings of length $n$ over $\Sigma$; by $\Sigma^{\leq n}$ we denote the strings of length at most $n$ over $\Sigma$. A subset $L \subseteq \Sigma^*$ is called a language. 
The notation $w{[i]}$ is used to denote the symbol at position $i$ of $w$. Let $n,m \in \mathbb{N}$ be a pair of positive integers such that $n \geq m$. The string $u \in \Sigma^m$ is a \emph{factor} of $w \in \Sigma^n$ if and only if there exists an index $i \in [n - m]$ such that $w{[i]} w{[i + 1]} \dots w{[i + m - 1]} = u$; in that case, we denote $u=w{[i,i+m-1]}$. A factor $w{[i,j]}$ of $w \in \Sigma^n $ is a prefix (resp., a suffix) of $w$ if $i=1$ (resp., $j=n$).\looseness=-1

A deterministic finite automaton (DFA) is a construct $A=(Q,\Sigma, q_0,F,\delta)$, where $Q$ is a finite set of states, $\Sigma$ is the input alphabet, $q_0$ is the initial state, $F$ is the set of final states, and $\delta: Q\times \Sigma \rightarrow Q$ is the transition function. 

A DFA $A$ can be seen canonically as a labelled, directed multi-graph, denoted $G(A)$, with the set of vertices $Q$ and having an edge $(q_1,q_2)$ (so, from $q_1$ to $q_2$) labelled with $a$ if and only if there exists a symbol $a\in \Sigma$ such that $\delta(q_1,a)=q_2$. If there exists multiple symbols $a$ such that $\delta(q_1,a)=q_2$, then the graph has multiple edges between $q_1$ and $q_2$, each labelled with the corresponding symbol; however, due to the determinism of $A$, there are no two edges with the same label between the same two vertices. Using this interpretation of DFAs as graphs, the language accepted by the DFA $A$, denoted $L(A)$, is the set of labels of walks between the vertex $q_0$ and the vertices corresponding to final states (the label of a walk is obtained by concatenating, in order, the labels of its edges). Note that viewing DFAs as directed, labelled multi-graphs is a standard and widely used tool in formal languages, and allows for a very intuitive way of seeing strings of $L(A)$ as walks in the graph $G(A)$ and, vice versa.

The class of languages accepted by DFAs is exactly the class of regular languages.

A state in $q \in Q$ is a \emph{failure state} if no walk (including the empty walk) starting in this state ends in a final state. In a standard way, one can detect the failure states of an automaton in linear time w.r.t. the size of the automaton. If these failure states (and the edges incident to such a state) are eliminated from $A$, we obtain an {\em incomplete} DFA (meaning that the transition function becomes partial). Moreover, an incomplete DFA $A=(Q,\Sigma, q_0,F,\delta)$ can be cannonically completed by adding a single failure state $q$ and setting all the undefined transitions $\delta(r,a)\gets q$, for $r\in Q$ and $a\in \Sigma$, as well as $\delta(q,a)\gets q$, for all $a\in \Sigma$.

For more details on DFAs and the languages accepted by them, see \cite{HopcroftUllman}. \looseness=-1

In this paper, we are interested in \emph{prefix-closed regular languages} (denoted, for short, PCL-languages). A regular language $L$ is prefix-closed if $s{[1,i]}\in L$, for any string $s\in L$ and $i\in [\vert s\vert]$; that is, all prefixes of the strings in $L$ are also in $L$. As an alternative characterization, a regular language $L$, recognised by the complete DFA $A=(Q,\Sigma, q_0,F,\delta)$, is a PCL if and only if every state in $Q$ is either a final state or a failure state. Alternatively, a regular language $L$, recognised by the incomplete DFA $A=(Q,\Sigma, q_0,F,\delta)$, is a PCL if and only if every state in $Q$ is final.\looseness=-1

Examples of PCLs are given in Section \ref{app:singleFF}, and this class of languages is also discussed in \cite{CevorovaJMPS14,KaoRS09} and the references therein. 

In this paper, we assume that a PCL $L$ is always given as an incomplete DFA $A=(Q,\Sigma, q_0,Q,\delta)$ with all states final, called \emph{prefix-closed finite automata} (PCA, for short). All the walks in the graph corresponding to the PCA $A$ go through final states only, and those walks starting in the initial state of a PCA correspond bijectively to strings of the language accepted by $A$. This property is fundamental for our algorithms: enumerating the strings of length $m$ accepted by a PCA $A$ is equivalent to enumerating the labelled walks of length $m$ in the multi-graph $G(A)$ corresponding to the automaton $A$. \looseness=-1

\subsection{Languages of Strings with Forbidden Factors}\label{app:singleFF}

An interesting example of PCL is the following: Let $\mathcal{F}$ be a finite set of strings over an alphabet $\Sigma$, called {\em forbidden strings}. The language $L_{\mathcal{F}}$ of the strings over $\Sigma$ that do not contain any forbidden string (i.e., a string in ${\mathcal F}$) as a factor is a PCL-language. We can efficiently construct a PCA accepting it, based on the standard Aho-Corasick automaton recognising all strings ending with elements of ${\mathcal F}$ as shown in Lemma \ref{lem:Aho_Corasick}.

\begin{lemma}\label{lem:Aho_Corasick}
Given finite set of strings $\mathcal{F} \subseteq \Sigma^*$ (called the set of forbidden strings), we can build a PCA accepting $L_{\mathcal{F}}$ in $O(\vert \Sigma \vert \sum_{ w \in \mathcal{F}} \vert w \vert )$ time.
\end{lemma}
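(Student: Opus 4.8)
The plan is to construct the desired PCA directly from the Aho--Corasick automaton of the forbidden set $\mathcal{F}$, exploiting the fact that, once the failure links are resolved, reading any letter from any state takes us to a well-defined state whose ``deepest'' associated pattern suffix tells us whether a forbidden string has just been completed. First I would recall the standard construction: build the trie of $\mathcal{F}$, which has $O(\sum_{w\in\mathcal{F}}|w|)$ nodes, and compute for every node its failure link (the node corresponding to the longest proper suffix of the current node's string that is also a trie node). This is the classical Aho--Corasick preprocessing and runs in $O(|\Sigma|\sum_{w\in\mathcal{F}}|w|)$ time, since for each of the $O(\sum_{w\in\mathcal{F}}|w|)$ nodes and each of the $|\Sigma|$ letters we may need to follow failure links, but the usual amortization argument (the total work along failure links is linear in the trie size) keeps the bound within the stated budget; alternatively one precomputes the full ``goto'' function $\delta(q,a)$ for all states $q$ and letters $a$ in exactly $O(|\Sigma|\sum_{w\in\mathcal{F}}|w|)$ time, which is what we want for a DFA transition table anyway.

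Next I would mark the \emph{dangerous} states: a trie node is dangerous if the string it represents has some element of $\mathcal{F}$ as a suffix, i.e. either it is itself an accepting node of the trie (the end of some $w\in\mathcal{F}$) or its failure link leads (transitively) to such a node. This ``output'' propagation is done in a single pass over the nodes in BFS/failure-link order, again within the time budget. Then the PCA $A$ for $L_{\mathcal{F}}$ is obtained as follows: take as state set all the \emph{non-dangerous} trie nodes, with the root as initial state, all states final; for a non-dangerous state $q$ and letter $a$, define the transition $\delta(q,a)$ to be the Aho--Corasick goto-target $q'$ \emph{if} $q'$ is non-dangerous, and leave $\delta(q,a)$ undefined otherwise (this is where we use that $A$ is an \emph{incomplete} DFA, matching the PCA definition given in the excerpt). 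One then argues correctness: a string $s$ has no forbidden factor iff, feeding $s$ into the Aho--Corasick automaton, we never pass through a dangerous state, which is exactly the condition that the walk of $s$ in $A$ is defined throughout and ends in a (final) state; hence $L(A)=L_{\mathcal{F}}$. Since every state of $A$ is final, $A$ is by definition a PCA, and indeed $L_{\mathcal{F}}$ is prefix-closed because removing letters from the end of a factor-free string keeps it factor-free.

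The bulk of the remaining work is bookkeeping the time bound: the trie has $O(\sum_{w\in\mathcal{F}}|w|)$ states, computing failure links and the complete goto function costs $O(|\Sigma|\sum_{w\in\mathcal{F}}|w|)$, the output/dangerous-state propagation is linear in the trie size, and pruning the dangerous states (and the edges into them) is a single linear scan; summing gives $O(|\Sigma|\sum_{w\in\mathcal{F}}|w|)$ as claimed. I would state these as routine and only spell out the correctness equivalence carefully. The main obstacle, and the only genuinely delicate point, is the correctness claim that it suffices to forbid \emph{entering} a dangerous state rather than having to track longer context: this rests on the defining property of Aho--Corasick that after reading a prefix $p$ of the input the current state always represents the longest suffix of $p$ that is a trie node, so that $p$ contains some $w\in\mathcal{F}$ as a factor if and only if at some point the current state was dangerous. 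Establishing that invariant (by induction on $|p|$, using the failure-link construction) is the heart of the argument; everything else is standard automata and trie manipulation well within the time bound.
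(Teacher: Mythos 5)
Your construction is essentially the paper's own proof: build the Aho--Corasick DFA recognising all strings ending with a member of $\mathcal{F}$, delete its accepting (``dangerous'') states together with their incident transitions, and declare all remaining states final, yielding an incomplete DFA with all states final, i.e.\ a PCA for $L_{\mathcal{F}}$ within the stated time bound. You simply spell out the output-propagation along failure links and the correctness invariant that the paper leaves implicit; both arguments are correct and coincide in substance.
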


\begin{proof}
Let $A$ be the DFA constructed from the Aho-Corasick machine \cite{aho1975efficient} accepting all strings over $\Sigma$ ending with strings from $\mathcal{F}$. The DFA $A$ can be constructed in $O(\vert\vert\mathcal{F}\vert\vert  \vert\Sigma\vert )$ time. We can convert $A$ into an incomplete DFA $A'$ that recognises strings that do not contain any string from $\mathcal{F}$ as a factor by first removing all final states of $A$ and then making all other states final. Every transition from $A$ that did not involve a final state is then copied to $A'$.\looseness=-1
\qed \end{proof}

 Further, let us consider the language $L_{\mathcal F}$ in the case of a single forbidden factor (i.e., $\mathcal{F}=\{f\})$. In that case, a succinct representation of the incomplete DFA accepting $L_{\mathcal{F}}$ can be constructed more efficiently, in $O(m)$ time only, where $m=\vert f\vert$. The key observation (see, e.g., \cite{CliffordJPS12}) is that the DFA $A=(Q,\Sigma,q_0,F,\delta)$ accepting all strings over $\Sigma$ ending with $f$ has the following structure:\looseness=-1
\begin{itemize}
    \item $Q=\{q_0,q_1,\ldots,q_{m}\}$ and $q_{m}$ is the single final state;
    \item $\delta(q_{i-1},f_i)=q_{i}$ for $i\in \{0,1,\ldots, m-1\}$ and for all $i\in \{0,1,\ldots, m\}$ and $a\neq f_{i}$ we have that $\delta(q_{i-1},a)=q_j$ for some $j\leq i-1$;
    \item the number of transitions connecting non-initial states of this automaton is in $O(m)$. 
\end{itemize}
Therefore, to construct $A$, it is enough to compute the transitions connecting its states to other non-initial states (as all other transitions lead to $q_0$). This can be done in $O(m)$ time using, e.g., the Knuth-Morris-Pratt algorithm \cite{KnuthMP77}. So, indeed, a succinct representation of the PCA accepting $L_{\mathcal{F}}$ can be constructed in $O(m)$ time (again, by removing the final state of $A$, making all other states final, and noting that all undefined transitions lead to $q_0$, except for the transition $\delta(q_{m-1},f_{m})$ which leads to a failure state).

Obviously, $L_{\mathcal F}$ is a PCL even if ${\mathcal F}$ is regular but not finite; $L_{\mathcal F}$ remains prefix closed (but not necessarily regular) for any language ${\mathcal F}$. For completeness, let us note here that there are PCLs which cannot be defined as $L_{\mathcal{F}}$ for some set ${\mathcal{F}}$. Indeed, the set of strings $L=\{w\mid w$ is a prefix of the infinite string $(abc)^\infty \}$ is a PCL, but it cannot be defined as $L_{\mathcal{F}}$ for some set ${\mathcal{F}}$, as any language $L_{\mathcal{F}}$ is also, e.g., suffix closed, and this is not the case of $L$. 

\subsection{Algorithm Sketch.}
The key idea behind our approach is to enumerate the set of walks via an \emph{implicit} yet \emph{meaningful} representation of the walks within $G$, i.e. a representation that does not require the walk to be explicitly output but allows the explicit representation to be retrieved in linear time. We do so by creating a pseudoforest $D(G)$ from $G$, which has the same vertices as $G$. Moreover, each vertex has at most one outgoing edge, called \emph{default edge}, corresponding to the first edge on (one of) the longest walk(s) leaving that vertex in $G$. We refer to $D(G)$ as the \emph{default graph} and note that there is a unique (potentially infinite) walk leaving the vertex $v$ in $D(G)$. The walks in $D(G)$ are called \emph{default walks}. Using the default graph, we can succinctly represent default walks as tuples: the unique default walk of length $\ell$ starting with $v$ is represented as $(v, \ell)$.\looseness=-1

When solving Problem \ref{prob:enum}, we maintain the most recently enumerated walk $\pi$ (which is a walk in $G$) as a list $(v_1, \ell_1) (u_1, v_2) (v_2, \ell_2) \dots, (u_{k - 1}, v_{k}) (v_k, \ell_k)$ where $(v_i, \ell_i)$ is the default walk starting in $v$ of length $\ell_i$, ending at $u_i$, and $(u_i, v_{i + 1})$ is a non-default edge of $G$ (so, not an edge of $D(G)$). To modify this walk $\pi$, and continue the enumeration, we find the last vertex $v'$ in the current walk $\pi$ that has a \emph{branch}, i.e., a non-default edge, $(v', u')$, which has not been considered yet as a continuation for the prefix of $\pi$ which connects $v_1$ to $v'$, and which starts at least one walk as long as the suffix of the current walk $\pi$ which connects $v'$ to $u_k$. Once $v'$ has been identified, a new walk is constructed, represented by $(v_1, \ell_1) (u_1, v_2) (v_2, \ell_2) \dots, (u_{i - 1}, v_{i}) (v_i, \ell_i ') (v' , u') (u', \ell_{i + 1}')$, i.e. a walk sharing the first $m - (\ell_{i + 1}' + 1)$ edges with $\pi$, followed by the non-default edge from $v'$ to $u'$, then the default walk from $u'$ of corresponding length. Importantly, this new walk can be enumerated by only changing at most three entries in the list representing $\pi$: first, updating the tuple $(v_i, \ell_i)$ to $(v_i, \ell_i')$; second, adding the edge $(v', u')$; third, adding the tuple $(u', \ell_{i + 1}')$. In this way, we require only a constant number of steps to update the walk.\looseness=-1

The main challenge of this algorithm is determining these branches in constant time while keeping the preprocessing linear. In Section \ref{sec:data_structures} we define the data structures used to achieve this result and show that these can be built in $O(\vert E \vert )$ time, where $\vert E \vert$ is the number of edges in the input graph. Section \ref{sec:enumeration} formalises the algorithms, and explains why the worst-case delay between outputs is $O(1)$.\looseness=-1

\section{Toolbox: Default Graphs and Data Structures}
\label{sec:data_structures}

\paragraph{Default edges, default walks, default graphs: definitions and basic facts.}
For the remainder of this paper, we consider the directed graph $G = (V, E)$, where $V = \{v_1, v_2, \dots, v_n\}$ is a set of $n$ vertices, and $E \subseteq V \times V$ is a set of directed edges represented by ordered pairs of vertices $(v, u)$.
We assume that we store a list of all outgoing and incoming edges for every vertex $v \in V$. We now introduce the primary data structures that are used for our enumeration algorithm.\looseness=-1

Firstly, observe that the longest walk starting with vertex $v$ either has length at most $n$ or infinite length. Hence, we compute, for each vertex $v \in V$, the length $\pi(v) \in [n] \cup \{\infty\}$ of the longest walk starting at $v$. Moreover, we compute and store an ordered list $L_{v}$ for each vertex $v \in V$, containing the pairs $((v, u), \ell)$, where $\ell\in [n]\cup\{\infty\}$ is the length of the longest walk from $v$ starting with the edge $(v, u)$. The list $L_{v}$ is ordered in decreasing order of the length-component of its elements, with ties broken according to the ordering of the target vertices of the edge-component of these elements, as induced by the ordering on $V$. We can show the following Lemma.

\begin{lemma}
    \label{lem:max_walk_length}
    Given a directed graph $G = (V, E)$, the lengths $\pi(v)$ and the lists $L_{v}$, for all $v \in V$, can be computed in $O(\vert E \vert )$ time.
\end{lemma}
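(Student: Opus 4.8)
The plan is to first compute all the values $\pi(v)$ using strongly connected components, and then to read off the lists $L_v$ by a single global radix sort; the crucial point is that the sort keys (the walk lengths) are integers bounded by $n$, so no logarithmic factor is incurred.

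\emph{Computing $\pi(v)$.} I would start from two observations: $\pi(v)=\infty$ exactly when $v$ can reach a cycle, and if $(v,u)\in E$ with $\pi(v)$ finite then $\pi(u)$ is finite as well (otherwise an arbitrarily long walk from $u$ prefixed by $(v,u)$ would contradict finiteness of $\pi(v)$). I would then run Tarjan's algorithm, computing the strongly connected components of $G$ in $O(|V|+|E|)=O(|E|)$ time (using $n\le |E|$); this algorithm moreover produces the components in an order such that, when a component $C$ is output, every component reachable from $C$ has already been output. Call a component \emph{cyclic} if it has at least two vertices, or is a single vertex carrying a self-loop. Processing the components in the order produced by Tarjan, I set $\pi(v)=\infty$ for every $v$ in a cyclic component, and for a singleton non-cyclic component $\{v\}$ I set $\pi(v)=\max(\{0\}\cup\{1+\pi(u) : (v,u)\in E\})$, with the convention $1+\infty=\infty$. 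The recurrence is well-founded, since a singleton non-cyclic component $\{v\}$ has no edge $(v,v)$, so each out-neighbour $u$ of $v$ lies in a different, already-processed component. A routine induction along this order shows that $\pi(v)$ equals the length of the longest walk from $v$; the pass costs $O(|E|)$ as each edge is touched a constant number of times. Note that a finite $\pi(v)$ is at most $n-1$, since a longest finite walk cannot repeat a vertex.

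\emph{Building the lists.} For an edge $(v,u)$, the longest walk from $v$ whose first edge is $(v,u)$ has length $1+\pi(u)$ (again with $1+\infty=\infty$), which is now known. Hence $L_v$ must list the pairs $((v,u),\,1+\pi(u))$ over the out-edges of $v$, in decreasing order of the length component and, for equal lengths, in increasing order of the target vertex. I would attach to each edge $(v,u)$ the key $(v,\ r(v,u),\ u)$, where $r(v,u)=(n+1)-(1+\pi(u))$ if $\pi(u)$ is finite and $r(v,u)=0$ if $\pi(u)=\infty$, so that increasing $r$ means decreasing length; every coordinate is an integer in a range of size $O(n)$. A three-pass least-significant-digit radix sort — a stable counting sort by $u$, then by $r(v,u)$, then by $v$, each over $O(n)$ buckets — sorts all $|E|$ edges in $O(|E|+n)=O(|E|)$ time, after which the edges already appear grouped by their source $v$, each group being exactly $L_v$ in the required order; one more linear scan isolates the individual lists (with $L_v=\emptyset$ for sources of no edge).

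The step needing care is the last one: sorting each $L_v$ separately would cost $\sum_v |L_v|\log|L_v|=O(|E|\log|E|)$, so the bound genuinely relies on sorting all edges at once and exploiting that the length key ranges over a set of size $O(n)=O(|E|)$, which is what licenses radix sort within the claimed time. The remaining ingredients — Tarjan's algorithm and the DAG-style longest-walk recurrence — are standard once the two observations at the start of the second paragraph are in place.
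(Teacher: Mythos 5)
Your proposal is correct and follows essentially the same route as the paper: Tarjan's SCC algorithm to identify the vertices with $\pi(v)=\infty$, a DAG-style longest-walk recurrence for the finite values, and a global radix sort of the edges keyed on the (bounded) walk lengths to assemble the lists $L_v$ in $O(|E|)$ time. Your version merges the paper's separate ``mark everything that reaches an SCC, then topologically sort the rest'' phases into a single pass over Tarjan's component output order via the convention $1+\infty=\infty$, and your three-key radix sort explicitly realises the tie-breaking by target vertex that the definition of $L_v$ demands (the paper's proof breaks ties arbitrarily), but these are implementation-level refinements rather than a different argument.
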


\begin{proof}
The following algorithm is rather standard (and could be considered folklore), so we go quickly over it.

We first detect in $O(\vert V \vert + \vert E \vert )$ the strongly connected components of this graph, using, e.g., Tarjan's algorithm \cite{TarjanSC}. Then, we detect all the other vertices of $G$ from which there is a walk to a vertex from one of the strongly connected components. This, e.g., can be done as follows:
\begin{itemize}
    \item Initially, we colour all the vertices of $G$ white.
    \item We put all the vertices of the strongly connected components in a queue $R$ and also colour them red. 
    \item We repeat the following step until $R$ is empty: we extract the first vertex $p$ in the queue, and put in $R$ all the white vertices $s$ such that there is an edge $(s,p)$ in $G$, and colour each such vertex $s$ red. 
\end{itemize}
After this, we set $\pi(v)=\infty$ for all the red vertices, and remove them from $G$, as well as all the edges leaving or entering them. This whole process can be implemented in $O(\vert E \vert )$ time.

The remaining white vertices and edges form a directed acyclic graph $G'$. Now, we run the following folklore procedure: 
\begin{itemize}
    \item Find a topological ordering of the vertices of $G'$.
    \item Consider the vertices in inverse order of the topological ordering. When $v$ is considered in this order, set $\pi(v)$ to be $0$ if there is no edge leaving $v$ or, otherwise, $1$ plus the maximum of the values $\pi(u)$ where $(v,u)$ is an edge in $G'$.
\end{itemize}

It is immediate that, at this point, we have computed the length of the longest walk leaving each vertex $v$ and stored it in $\pi(v)$. Again, this whole process runs in $O(\vert E \vert )$ up to this point.

Further, we explain how the lists $L_v$ are computed, for all vertices $v$. Initially, all these lists are empty. Then, for each edge $(v, u)$, we store the tuple $((v, u), \pi(u))$ in a set $S$. As $\vert s\vert\in O(\vert E \vert )$, we sort (in $O(\vert E \vert )$ time, using radix sort) the triples of $S$ decreasingly according to the third component, breaking ties arbitrarily. Now, we go through the elements of the decreasingly sorted list $S$ and when the element $((v, u), \pi(u))$ is reached we add the pair $((v,u),\pi(u)+1)$ as the last element of the list $L_{v}$. This process computes correctly the lists $L_{v}$, requiring $O(\vert E \vert )$ time.

This concludes the proof of this lemma. \qed
\end{proof}

Now, for each vertex $v$, the first edge $(v,u)$ of the list $L_v$ is the first edge on (one of) the longest walk(s) starting with $v$. This first edge of the list $L_v$ is called in the following \emph{default edge} of $v$ (note that, by the definition of $L_v$, the notion of default edge is unambiguous, although there might be more than one longest walk starting with $v$). At this point, it is also important to note that, for each edge $(v,u)$, the longest walk from $v$ starting with $(v, u)$ has a length equal to the length of the longest walk starting with $u$ plus $1$. Consequently, there is a walk from $v$ starting with the edge $(v, u)$ whose length is maximal among all walks starting with $(v,u)$ and whose second edge is the default edge of $u$.\looseness=-1

Further, we define the \emph{default graph} $D(G) = (V, D(E))$. $D(G)$ has the same vertex set $V$ and the edge set $D(E)$ containing exactly the default edges of the vertices from $V$. By Lemma \ref{lem:max_walk_length}, as the default edges of $G$ (that is, the edges of $D(G)$) can be retrieved by simply taking the first edges of each of the lists $L_v$, with $v\in V$, we get that $D(G)$ can be constructed efficiently, in $O(|V|)$ time. Moreover, because each vertex $v$ of the default graph $D(G)$ has at most a single outgoing edge (the default edge of $v$), this graph is a pseudoforest, consisting of a collection of {\em default components}: disjoint cycles, trees whose roots are on cycles, and, respectively, independent trees, which are not connected to any cycle. In the trees of this collection, the orientation of the edges is induced by the orientation of the edges in $G$, i.e., from children to parents, so from the leaves towards the root. For an example, see Figure~\ref{fig:component_overview}.\looseness=-1

The walks of the graph $D(G)$ are called \emph{default walks}. We use the notation $(v, \ell)$ to represent the unique default walk of length $\ell$ starting from $v$.\looseness=-1

\begin{figure}
    \centering
    \vspace{-0.5cm}
    \begin{tabular}{r|l}
    \includegraphics[scale=0.3]{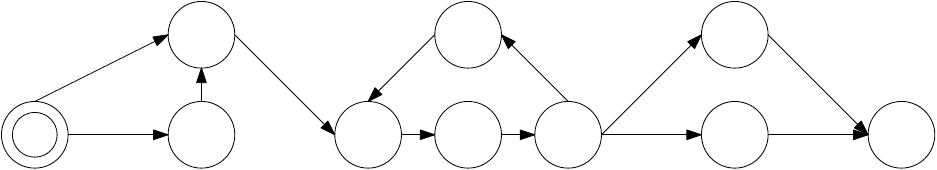} & 
    \includegraphics[scale=0.3]{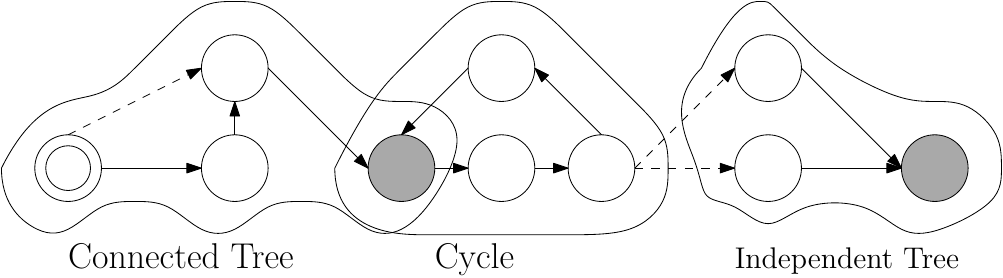}
    \end{tabular}
    \caption{The different classes of default components:
    On the left we have a directed graph; on the right, for that graph, default edges are shown as solid lines, and non-default edges are dashed. From left to right, we have a tree that is connected to a cycle, a cycle, and an independent tree. The tree-roots are grey. \looseness =-1}
    \vspace{-0.5cm}
    \label{fig:component_overview}
\end{figure}

We use the default walks as a tool to represent all walks in $G$. More precisely, given a walk $\pi = ((v_1, v_2), (v_2,v_3), \ldots , (v_{k-1},v_k))$, we represent the walk $\pi $ as a sequence $ (v_{i_1}, \ell_{i_1}) (v_{t_1}, v_{i_2}) (v_{i_2}, \ell_{i_2}) (v_{t_2}, v_{i_3}) \dots (v_{t_{r - 1}}, v_{i_r}) (v_{i_r}, \ell_r)$, where default walks and non-default edges alternate:
\begin{itemize}
    \item $v_{i_1}=v_1$ and $v_k$ is the final vertex of the default walk $(v_{i_r}, \ell_r)$. 
    \item For $1\leq x\leq r$, $(v_{i_x},\ell_{i_x})$, with $\ell_x\geq 0$, is the longest default walk which is a prefix of the suffix $(v_{i_x},v_{i_x+1},\ldots,v_k)$ of $\pi$. The vertex $v_{t_x}$ is the last vertex on the default walk $(v_{i_x},\ell_{i_x})$.
    \item For $1\leq x\leq r-1$, $(v_{t_x},v_{i_{x+1}})$ is a non-default edge and $i_{x+1}=t_x+1$.
\end{itemize}    
Alternatively, to obtain this representation of $\pi $, we could first select all the non-default edges of $\pi$. These edges are connected, along $\pi $, by default walks (of length greater or equal to $0$). This yields the aforementioned representation.

Before defining the data structures allowing us to process efficiently default graphs, we make one more observation. Consider a vertex $v$ and the first edge $(v,u)$ of $L_v$ (i.e., the default edge of $v$). Now, the first edge $(v, u')$, with $u \neq u'$, on the longest walk starting from $v$ with any edge other than $(v,u)$, is given by $((v, u'), \ell')$, the second element of $L_{u}$. So, these lists enable us to decide in constant time if, for a given length $\ell \leq m$ and vertex $v$, there exists some walk from $v$ of length at least $\ell$, other than the default walk: we get this information by looking at the second element of $L_v$. This will become crucial in finding branches, as described in the sketch of our algorithm, during the enumeration. 
\looseness=-1

\paragraph{Efficient algorithms and data structures for the default graph.} To work efficiently with the representations of arbitrary walks based on default walks, we need to be able to efficiently process the default graph. To this end, we now present a set of combinatorial lemmas, tools, and data structures providing a deeper understanding of the default graph. We begin by showing that the components of $D(G)$ can be computed efficiently.
\looseness=-1


\begin{lemma}\label{lem:components}
Given default graph $D(G)$, we can compute in $O(\vert E \vert )$ time all its default components, and store for each vertex $v$ the default component containing it.\looseness=-1
\end{lemma}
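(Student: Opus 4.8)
The plan is to use the structural description of $D(G)$ as a pseudoforest --- each vertex has out-degree at most one, so every connected component is either an independent tree, or a ``rho-shape'' consisting of a single cycle with trees hanging off it. Since each vertex $v$ stores its default edge (the first entry of $L_v$), we have $O(1)$-time access to the unique successor of every vertex. The goal is to label every vertex with an identifier of its default component in total $O(|E|)$, which, since $D(G)$ has at most $|V|\le |E|$ edges, is the same as $O(|V|)$ time.

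First I would detect the cycles of $D(G)$. Because out-degree is at most one, the classic approach works: process vertices in an outer loop, and from each still-unvisited vertex follow the chain of default edges, marking vertices as ``in progress'' with a timestamp (the index of the current outer-loop iteration). The walk from a start vertex either (i) reaches a vertex already finalized in a previous iteration --- then the whole current chain belongs to that vertex's component; (ii) reaches a vertex with no outgoing default edge (a sink, i.e.\ $\pi(v)=0$) --- then the chain is part of an independent tree whose component we give a fresh id; or (iii) reaches a vertex stamped ``in progress'' in the \emph{current} iteration --- then we have found a new cycle, namely the portion of the chain from the first occurrence of that vertex onward. In all cases we then walk the chain a second time assigning the determined component id and marking vertices finalized. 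A standard amortization argument shows each vertex is touched $O(1)$ times across all iterations (a vertex is put ``in progress'' at most once, since afterwards it is either finalized immediately or finalized when its chain resolves), so this phase is $O(|V|)$.

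For the trees hanging off a cycle (and for independent trees), the remaining work is to make sure every vertex that can reach a cycle, or reach the designated root of an independent tree, gets the right component label. A clean way is: after the cycle-detection phase every vertex already received a label during its second walk, because when we finalized a chain we propagated the component id of its eventual destination back along the whole chain. Concretely, the destination of a chain is either a cycle vertex (component id = that cycle's id), or a sink (component id = fresh id created for that independent tree); in case (i) above the destination already carries its final id, so we just copy it. Hence a single pass through the vertices as outer loop, each triggering at most one forward chain-walk and one backward assignment walk, computes everything. Alternatively, one could first contract each cycle to a single node and each sink as a root, obtaining a forest, and run a linear-time rooting/BFS from the roots following reversed default edges --- but this requires building the reverse adjacency of $D(G)$, which is still $O(|V|)$, so either route meets the bound.

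The main obstacle is purely bookkeeping: arguing the amortized $O(|V|)$ bound for the chain-following, i.e.\ that the ``in progress'' timestamps correctly distinguish a freshly discovered cycle (case iii) from a chain that merges into previously explored territory (case i), and that no vertex is re-explored as the head of a chain after being finalized. This is the standard functional-graph / $\rho$-path analysis and presents no real difficulty, only care. Everything else --- creating fresh component ids, storing the component of each vertex in an array --- is trivially within budget, and the bound $O(|E|)$ follows since $|E|\ge|V|$.
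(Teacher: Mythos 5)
Your proposal is correct and runs in the required $O(|V|)\subseteq O(|E|)$ time, but it takes a different (and equally standard) route from the paper. You use the forward functional-graph traversal: follow default-edge chains with ``in progress'' timestamps to detect cycles, then back-propagate a component identifier along each chain. The paper instead works backwards: it first finds the sinks ($\pi(v)=0$), runs a traversal of $D(G)^{-1}$ from each to peel off the independent trees, then follows forward chains from the remaining vertices to locate cycles, and finally runs backward traversals from each cycle vertex to recover the trees hanging off the cycle. The two are interchangeable for the purpose of partitioning vertices, and your amortization argument (each vertex enters the ``in progress'' state at most once) is the right one. One caveat worth flagging: the paper's notion of default component is finer than what your primary labelling produces. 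A tree whose root lies on a cycle is a component in its own right, distinct from the cycle, and the downstream machinery (the depths $d^t_v$ and $d^c_v$, the level-ancestor structures on each tree, and the $2|\alpha|$-length array for each cycle $\alpha$) needs the explicit rooted-tree representation (root, parent/children lists, levels) and the cyclic order, not just a ``which cycle or sink do I eventually reach'' tag. Your back-propagation step assigns every vertex of a hanging tree the cycle's id and does not by itself build these structures; it is your parenthetical alternative --- contract each cycle, build the reverse adjacency of $D(G)$, and traverse from sinks and from cycle vertices --- that actually delivers what the later lemmas consume, and that is essentially the paper's construction. Since that alternative is also linear-time, the proof goes through, but you should make it the main route rather than an aside.
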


\begin{proof}
We first identify in $D(G)$ the vertices $v$ with out-degree $0$ (that is, the vertices with $\pi(v)=0$). These are the roots of the independent trees. For each such vertex $v$ we perform a depth-first search in $D(G)^{-1}$ (the graph $D(G)$ with the orientation of the edges inverted), allowing us to discover the independent tree rooted in $v$. This process takes time proportional to the number of edges in these trees, so $O(n)$ overall, and we can associate to each vertex in the trees a pointer to the default component (e.g., a pointer to the root of the independent tree) which contains it.

To discover the cycles, we consider a vertex $v$ which does not belong to any independent tree. This is either on a cycle or in a tree whose root is on a cycle. In all cases, we can start a traversal of the graph following the only edge leaving $v$ (and, further, the single edges leaving the vertices we meet in this traversal). Following this walk will reach some vertex a second time, allowing us to identify a cycle of $D(G)$, denoted $\alpha$. If $v$ is the first vertex reached a second time, then $\alpha$ contains $v$, otherwise $\alpha$ is the cycle that is the root of the tree containing $v$. This cycle $\alpha$ is stored by arbitrarily choosing an initial vertex for $\alpha$, and, for each of the remaining vertices, we store pointers to $\alpha$ (e.g., to its initial vertex) as well as their position on $\alpha$ (w.r.t. the initial vertex). Then, for each vertex $v_r$ of $\alpha$ we discover the tree rooted at $v_r$, using the same techniques as for independent trees, with the single difference being that we do not explore in our traversal of the vertex of the cycle $\alpha$ from which there is an incoming edge towards $v_r$. In this way, we find the trees rooted in each of the vertex $v_r$ on the cycle, and compute all the required information for their vertices. The time needed to do this is $O(t_\alpha)$ where $t$ is the total number of vertices of $\alpha$ and of the trees rooted in vertices of $\alpha$ (as the time needed to complete the traversals we performed in our algorithm is proportional to the number of edges in the cycle and the attached trees). After we are done, if there are still vertices that do not belong to the default components discovered so far, we select one of them and repeat the process.  

Note that, for simplicity, we will assume that if a vertex is both on a cycle and the root of a tree, then it has pointers both to the cycle and to the tree structure.

This algorithm finds the component containing every vertex in $V$, and requires time proportional to the number of edges in the graph, hence $O(\vert E \vert )$. \qed\end{proof}

Trees appearing as default components of $D(G)$ (independent or attached to a cycle) can be represented as the root, followed by a list of children for each vertex; additionally, the default edge of each vertex points to its parent in the tree. Each cycle $\alpha$ is represented by its length $\vert\alpha\vert$ and an array containing $2\vert\alpha\vert$ elements. More precisely, for each cycle $\alpha$, we have an initial vertex $r$, which is the first element in the array corresponding to $\alpha$. We then traverse through $\alpha$ twice, while writing in the array the vertices of $\alpha$ in the order we meet them in this traversal. So, each element of $\alpha$ appears exactly twice in the array associated with the cycle, with exactly $\vert\alpha\vert -1$ positions between its two occurrences. For each vertex $v$ of $\alpha$, it is enough to store its first occurrence $i_v$ in the corresponding array. While a bit cumbersome at first view, this representation of cycles makes a bit simpler the usage of default graphs in the enumeration. \looseness=-1


If vertex $v$ is on a cycle $\alpha$ of length $\vert\alpha\vert$, the default walk $(v,\ell)$ corresponds to following $\alpha$ starting at $v$ for $\ell$ edges. Letting $v$ be the $i^{th}$ vertex on $\alpha$ (the initial vertex being the first), then the ending vertex of the default walk $(v,\ell)$ is the vertex found on the $((i+\ell) \mod \vert\alpha\vert )^{th}$ position of the cycle.\looseness=-1

If vertex $v$ is in an independent tree rooted at $r$, the default walk $(v,\ell)$ goes towards the root of 
the tree, traversing $\ell$ edges. 
We define the root as being on level $0$, and the children of a vertex on level $i$ are on level $i+1$. Therefore, if $v$ is on level $h$ in its tree, the ending vertex of the default walk $(v,\ell)$ is the ancestor of $v$ on level $h-\ell$ of this tree. Thus, to be able to retrieve the ending vertices of default walks in trees quickly, we build, for all our trees (both independent and attached to cycles), {\em level ancestor} data structures \cite{BenderF04}. For a tree of size $\tau$, these data structures can be computed in $O(\tau)$ time and enable us to answer in $O(1)$ queries $LA(v,j):$ return the ancestor of $v$ which is on level $j$ of the tree.\looseness=-1

If $v$ is in a tree whose root $r$ is on a cycle, the walk $(v,\ell)$ goes towards the root of the tree and potentially also goes around the cycle, traversing $\ell$ edges in total. Following the ideas already described above, to retrieve the ending vertex of a default walk $(v,\ell)$ in a tree with root $r$ attached to a cycle, we will first check if the walk ends inside the tree or enters the cycle. This can be done by verifying if the level $h$ of $v$ inside its tree is greater or equal to $\ell$. If yes, we can compute again the level ancestor of $v$, which is on level $h - \ell$. If not, then the ending vertex of the walk $(v,\ell)$ is on the cycle. The number of edges traversed in the cycle by this walk is $\ell-h$, and to find the ending vertex of $(v,\ell)$ it is enough to find the ending vertex of the default walk $(r,\ell-h)$, which is a walk on a cycle, and can be treated as above.\looseness=-1

According to the above, it is important to store, for the vertices of the tree components of $D(G)$, their level. To allow a uniform treatment of all the vertices, we define the \emph{depth of vertex $v$ in its default component}, denoted $d^t_{v}$, if $v$ is in a tree, or, respectively, $d^c_{v}$, if $v$ is on a cycle. This is defined in one of three ways, depending on which kind of component contains $v$. Before giving the definition, we note that the vertices contained in trees whose root is on a cycle will have two such depths, one w.r.t. the tree and one w.r.t. the cycle.\looseness=-1
\begin{itemize}
    \item If $v$ is in a tree, then $d^t_{v}$ is simply the level of $v$ in the respective tree.
    \item For a cycle $\alpha$, we define the depth of each position $i\leq 2\vert\alpha\vert$ of the array associated to $\alpha$ as $d^c_i=2 \vert \alpha \vert - i + 1$. If $v$ is a vertex on $\alpha$, we define $d^c_v=d^c_{i_{v}}$, where $i_{v}$ is the first (i.e, leftmost) occurrence of $v$ in the array associated to $\alpha$.\looseness=-1
    \item If $v$ is in a tree with root $r$ connected to a cycle $\alpha$, then we associate to $v$ a second value $d^c_{v}$ (the depth of $v$ w.r.t. the cycle $\alpha$), defined as $d^c_{v}=d^t_{v}+d^c_{r}$.
\end{itemize}

\paragraph{Using the default graph for enumeration.}
The algorithm presented in Section \ref{sec:enumeration} enumerates all walks of length $m$ in $G$ starting at a given vertex $v_0$. We note that the preprocessing does not depend on the choice of the vertex $v_0$, it can be done once for all vertices. Our approach can be then extended by selecting (at the time when the values $\pi(v)$ are computed) a list of all vertices with a default walk of length at least $m$, and then repeating the enumeration for each vertex of that list.\looseness=-1

An important primitive of this algorithm, allowing us to move in our enumeration from one walk $\pi_1$ to the next one $\pi_2$ (both starting with $v_0$), is to check whether there exists a vertex $v$ on some default walk $(s,\ell)$ (which is part of the representation of the first walk $\pi_1$) from which we can follow a non-default edge, instead of the default edge which we have followed in the walk $\pi_1$, and obtain a new walk starting in $v_0$ of length $m$. Such a vertex $v$ is called, for simplicity of exposure, a {\em branching vertex} w.r.t. the walk $\pi_1$ (notice, though, that some vertices might be branching w.r.t. some walks $\pi_1$ and not branching w.r.t. others, depending on the position in which they appear on these walks; however, we will only use this name when there is no danger of confusion). Thus, we need to check the existence of a vertex $v$ on $(s,\ell)$, such that, if the walk from $v_0$ to such a vertex $v$ along $\pi_1$ has length $\ell_{v}$, then there is a walk starting with a non-default edge of $v$, with length at least $m - \ell_{v}$. \looseness=-1

We achieve this by finding the vertex $v$ of the default walk $(s,\ell)$ maximising the sum between the length of the walk from $v_0$ to $v$ along $\pi_1$ and the length of the longest walk starting in $v$ with a non-default edge, and seeing if this sum is greater or equal to $m$. If this sum is not greater or equal to $m$, then there is no vertex with the desired properties on $(s,\ell)$. If the sum is greater or equal to $m$, the vertex $v$ has the desired properties, and thus we can use it next in our enumeration. To identify this vertex $v$, we note that $v$ is exactly that vertex for which the sum of the length of the walk from $s$ to $v$ and the length of the longest walk starting in $v$ with a non-default edge is maximum (as all walks from $v_0$ to vertices on the default walk $(s,\ell)$, along $\pi_1$, share the prefix of $\pi_1$ connecting $v_0$ to $s$, the starting vertex of that default walk $(s,\ell)$). \looseness=-1

We define for each vertex $v$ a weight $w_v$, corresponding to the length of the longest walk from $v$ which starts with a non-default edge, i.e. the length of the walk starting with the second edge stored in $L_v$, and then continuing with the longest default walk starting with the end vertex of that edge. After the preprocessing of Lemma \ref{lem:max_walk_length}, $w_v$ can be retrieved in $O(1)$ time, for each $v$.\looseness=-1

Let $\PMN(s,\ell)$ denote, for each default walk $(s,\ell)$ of $D(G)$, the pair $(v,d)$ where $v$ is the vertex of this walk such that the sum of $w_v$ and the distance $d$ between $s$ and $v$ along the default walk $(s,\ell)$ is maximum. In Lemma \ref{lem:max_node_on_walk}, we show that $\PMN$ queries can be answered in $O(1)$-time, after linear time preprocessing. Indeed, building on the data structures introduced above and taking into account the particular structure of the default graph, these queries reduce to either computing walk minimum queries in trees, which can be handled efficiently \cite{demaine2009cartesian}, or to range minimum queries in arrays corresponding to the cycles, which again can be computed efficiently \cite{BenderF00}.\looseness=-1

\begin{lemma}\label{lem:max_node_on_walk}
We can build in $O(\vert E \vert )$ time data structures, allowing us to answer $\PMN(s,\ell)$ queries in $O(1)$ time, for each default walk $(s,\ell)$ of $D(G)$.
\end{lemma}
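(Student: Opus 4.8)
The plan is to handle the three kinds of default components separately, reducing each $\PMN$ query to a well-studied query over a static structure that supports $O(1)$-time queries after linear-time preprocessing. Recall that the default walk $(s,\ell)$ starts at $s$ and follows default edges; we want the vertex $v$ on this walk maximising $w_v + d(s,v)$, where $d(s,v)$ is the number of edges from $s$ to $v$ along the walk. Equivalently, writing $d_v$ for the depth of $v$ in its component (the quantities $d^t_v$ and $d^c_v$ introduced above), the distance $d(s,v)$ can be expressed as a difference of depths, so that $w_v + d(s,v) = (w_v - d_v) + d_s$ on the tree portion and similarly with the cyclic depth on the cycle portion; since $d_s$ is a constant for a fixed query, maximising $w_v + d(s,v)$ is the same as maximising the vertex-label $w_v - d_v$ (a \emph{maximum}, hence a range/path \emph{maximum} or, after negation, \emph{minimum} query) over the set of vertices on the walk. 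So the task becomes: preprocess each component so that, given the two endpoints of a default walk, we can return the vertex on it of maximum label in $O(1)$ time, and then recover $d$ from the depths.

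First I would treat \textbf{independent trees} and the \textbf{tree parts of cycle-rooted trees}. Here a default walk $(s,\ell)$ that stays inside the tree is exactly the path from $s$ up to its level-$(h_s-\ell)$ ancestor, which we can locate in $O(1)$ via the level-ancestor structure of Lemma-referenced \cite{BenderF04}. So I need path-maximum queries on a rooted tree where one endpoint is an ancestor of the other; this is precisely the ``bottleneck/extremal vertex on a vertical path'' problem solved by the Cartesian-tree-on-a-tree technique of Demaine, Landau and Weimann \cite{demaine2009cartesian} (equivalently, one can root a Cartesian tree / use the Euler-tour + RMQ reduction), giving $O(\tau)$ preprocessing and $O(1)$ queries for a tree of size $\tau$. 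Having identified the maximising vertex $v$, the distance $d$ is simply $d^t_s - d^t_v$ (a subtraction of stored levels), so the pair $(v,d)$ is produced in $O(1)$.

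Second I would treat \textbf{cycles}. Using the length-$2|\alpha|$ array representation, a default walk $(s,\ell)$ on a cycle $\alpha$ with $\ell<|\alpha|$ corresponds to a contiguous block of the array of length $\ell+1$ starting at the stored first occurrence $i_s$ of $s$; the doubling guarantees the block does not wrap. I build a range-\emph{minimum}-query structure \cite{BenderF00} over the array whose value at position $i$ is $-\,(w_{v_i} - d^c_i)$ (where $v_i$ is the vertex at position $i$), obtaining $O(|\alpha|)$ preprocessing and $O(1)$ queries returning the position, hence the vertex, of maximum label in the block; again $d$ is recovered as a difference of the stored $d^c$ values. The one genuinely case-heavy point is \textbf{a default walk $(s,\ell)$ that starts in a cycle-rooted tree and then continues onto the cycle}, or one on a cycle with $\ell\ge|\alpha|$. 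In the first case the walk splits into a vertical tree path from $s$ to the root $r$ (handled by the tree structure) followed by a cycle walk $(r,\ell-h_s)$ (handled by the cycle structure), and $\PMN(s,\ell)$ is obtained by taking, in $O(1)$, the better of the two returned candidates after shifting the distances appropriately ($d^t_s - d^t_v$ for the tree part, $h_s + (d^c_r - d^c_v)$ for the cycle part). In the second case, a walk that wraps the cycle more than once contributes, for each cycle vertex, its single best occurrence, so it suffices to precompute for each cycle a global maximiser of $w_v$ over the whole cycle and combine it with at most one partial block; since $w_v\le m$ and we only need the representation (by length) of the resulting walk, no large-number arithmetic is involved and everything stays $O(1)$. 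I expect this gluing of the tree-part and cycle-part answers, together with getting the off-by-one bookkeeping in the array indices and depths exactly right, to be the main obstacle; the individual ingredients (level ancestors, path-minima in trees, RMQ) are all black boxes with the stated complexities, and the total preprocessing is a sum of $O(\tau)$ and $O(|\alpha|)$ terms over all components, which is $O(|V|)\subseteq O(|E|)$.
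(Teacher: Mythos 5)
Your overall strategy coincides with the paper's: define query-independent labels $w_v-d^t_v$ (resp.\ $w_v-d^c_v$) so that maximising $w_v+d(s,v)$ becomes a path/range maximum over these labels, answer tree queries by vertical path-maximum queries via \cite{demaine2009cartesian}, answer cycle queries by range-maximum queries on the doubled array via \cite{BenderF00}, and for a walk starting in a cycle-rooted tree that spills onto the cycle, split it into the tree part $(s,d^t_s)$ and the cycle part $(r,\ell-d^t_s)$, query each, and compare the two candidates after adjusting distances. All of this is correct and is exactly what the paper does.

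There is, however, a genuine gap in your handling of a cycle walk with $\ell\ge|\alpha|$. From the (correct) observation that only the last occurrence of each cycle vertex on the walk can be optimal, you conclude that it suffices to precompute one global maximiser of $w_v$ per cycle and combine it with a partial block. That does not follow: the quantity being maximised is $w_v+d(s,v)$, and among the last occurrences the distances $d(s,v)$ still differ by up to $|\alpha|-1$, so the vertex with the largest $w_v$ need not win (on a $3$-cycle whose final window is $a,b,c$ with $w_a=5$, $w_b=0$, $w_c=4$, the answer is $c$ since $4+2>5+0$). Moreover, which window of "last occurrences" is relevant depends on where the queried walk ends, so no single precomputed vertex per cycle can serve all queries. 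The correct step, as in the paper, is to note that the last occurrences occupy exactly $|\alpha|$ consecutive positions of the doubled array (a window determined by the two occurrences of the end vertex $s'$), and to answer with one more range-maximum query over the values $w_v-d^c_v$ on that window; this keeps the query time $O(1)$ without any extra precomputation. (A minor side remark: $w_v$ is bounded by $n$ or equals $\infty$, not bounded by $m$, but this does not affect the construction.)
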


\begin{proof}
Firstly, we will try to obtain a clearer image of which vertex we want to retrieve from each default walk.

Let us assume that we are given the default walk $(s,\ell)$. We have several cases, according to the type of the default component containing the respective walk. In each case, once we clarify what vertex we need to compute, we also explain what data structures are needed, and how they can be used to return this vertex and its distance from $s$.

The main idea is to define for each vertex of the graph $D(G)$ a value $f(v)$ which is independent of the walks we are given as queries, such that finding for some default walk $(s,\ell)$ of $D(G)$ the vertex $v=\PMN(s,\ell)$ of this walk, which maximises the sum of $w_v$ and the distance between $s$ and $v$ along the default walk $(v,\ell)$, becomes equivalent to finding the vertex $v$ of that walk, for which the value $f(v)$ is maximum.

{\bf Case 1.} Assume that $s$ is in an independent tree $\tau$. Then, the respective default walk $(s,\ell)$ is a walk going in the respective tree from $s$ towards its root $r$, and as explained before, we can retrieve its endpoint $s'$ in $O(1)$ time using one level ancestor query in $\tau$. We want to retrieve the vertex $v$ of this walk such that the sum of $w_v$ and the distance between $s$ and $v$ along the default walk $(s,\ell)$ is maximum. But this also means that the sum of $w_v$ and the distance between $s$ and $r$ in the tree, from which we subtract $d^t_v$, is maximum. In other words, we look for the vertex $v \neq s'$ on the unique walk from $s$ to $s'$, for which the value $f(v)= w_v-d^t_v$ is maximum; note that this value is independent of $(s,\ell)$. 

So, the setting is that we have the tree $\tau$ and associate to each of its vertices $v$ the value $w_v-d^t_v$. Now, we simply need to be able to retrieve the vertex of maximum value on walks starting from a vertex, having a given length, and going towards the root $r$. Hence, we need to answer {\em walk maximum queries in the tree} $\tau$. A data structure can be constructed in $O(n_\tau+T_\tau)$ time, allowing us to answer such queries in constant time \cite{demaine2009cartesian}. Here $T_\tau$ is the time needed to sort the values associated with the vertices of the tree. However, as these values can be sorted for all independent trees simultaneously, and the absolute values associated with the vertices of the trees are either $\infty$ or $O(n)$, then the total time needed to sort them is $O(n)$ (again, radix sort can be used, while keeping track of the tree from which each value comes). So, we can retrieve the vertex $v$ on the first component of a query $\PMN(s,\ell)$ in $O(1)$ time, after an $O(n)$-time processing; the distance $d$ between the start of the walk and the vertex $v$ is simply the difference between $d^t_s$ and $d^t_v$.

{\bf Case 2.} Assume now that $s$ is on a cycle $\alpha$. We can compute the end-vertex $s'$ of the walk $(s,\ell)$, as explained before. Now, let $i_s$ be the leftmost (first) occurrence of $s$ in the array associated with $\alpha$ and $i_{s'}$ (resp., $j_{s'}$) be the first (resp., second) occurrence of $s'$ in that array. If $\ell<\vert\alpha\vert$, then the walk from $s$ to $s'$ goes exactly once through the vertices contained between $i_s$ and $j_{s'}$ in the array associated to $\alpha$. Similarly to the case of trees, discussed above, it is again enough to identify that vertex $v$ for which $w_v-d^c_v$ is maximum (this value does not depend on $(s,\ell)$, just like above). If $\ell \geq \vert\alpha\vert$, then the walk from $s$ to $s'$ goes through all the vertices of the cycle one or more times. Now, if we consider a vertex $r$ on this cycle, it is not hard to see that the sum of $w_r$ and the distance between $s$ and $r$ along the default walk $(s,\ell)$ is maximum for the last occurrence of this vertex $r$ on the respective walk. Therefore, it is enough to select the vertex $q$ for which the sum of $w_v$ and the distance between $s$ and $v$ along the default walk $(s,\ell)$ is maximum by considering only the last occurrence of the vertices of $\alpha$ on the walk $(s,\ell)$. But, once more, just like in the case of trees, this means retrieving the vertex $v$ for which the value $w_v-d^c_v$ is maximum from the vertices appearing in the range $[i_{s'},j_{s'}-1]$. 

So, in both cases, we need to associate with each vertex $v$ in the array corresponding to $\alpha$ the value $f(v)=w_v-d^c_v$. Then, we build in $O(\vert\alpha\vert )$ time data structures allowing us to answer {\em range maximum queries for this array} in $O(1)$ time \cite{BenderF00}. That is, we can retrieve in $O(1)$ time the vertex $v$ of maximum value from a range $[g:h]$ of that respective array. The distance between $s$ and $v$ can be trivially computed in $O(1)$ time. This is precisely what we needed.

{\bf Case 3.} Assume, finally, that $s$ is in a tree $\tau$ whose root $r$ is on a cycle $\alpha$. We produce the same data structures as in case 1 for the tree $\tau$ and we already have the data structures mentioned in case 2 for the cycle $\alpha$. Now, we explain how to answer the queries for a walk $(s,\ell)$.

If the default walk $(s,\ell)$ is completely contained in $\tau$, we can proceed as in case 1. This can be checked by looking at whether the end vertex of the walk $(s,\ell)$ is a vertex of $\tau$. If $\ell > d^t_s$, then the walk ends inside the cycle $\alpha$. So, we split the default walk into two sub-walks $(s,d^t_s)$ and $(r, \ell-d^t_s)$ and obtain the vertex $v_1$ on the walk $(s,d^t_s)$ such that the sum of $w_{v_1}$ and the distance between $s$ and $v_1$ along the default walk $(s,d^t_s)$ is maximum (as in case 1) and the vertex $v_2$ on the walk $(r,\ell-d^t_s)$ such that the sum of $w_{v_2}$ and the distance between $r$ and $v_2$ along the default walk $(r,\ell-d^t_s)$ is maximum (as in case 2). Note that $v_2$ also has the property the sum of $w_{v_2}$ and the distance between $s$ and $v_2$ along the default walk $(s,\ell)$ is maximum compared to the respective sum for all the vertices on the walk $(r,\ell-d^t_s)$. Now, as the distance between $s$ and $v_1$ can be computed easily (it is $d^t_s - d^t_{q_1}$) as well as the distance between $s$ and $v_2$ (it is simply $\ell$ from which we subtract the distance between $v_2$ and the end of the walk $(r,\ell- d^t_{s})$), we can compare which is greater: the sum of $w_{v_1}$ and the distance between $s$ and $v_1$ along the default walk $(s,\ell)$ or the sum of $w_{v_2}$ and the distance between $s$ and $v_2$ along the default walk $(s,\ell)$. The vertex for which the respective sum is greater is the vertex we wanted to retrieve. It is important to notice that obtaining the respective vertex can be done in $O(1)$ (as well as the distance from $s$ to the respective vertex) after the linear preprocessing described at the beginning of the analysis for this case is performed. 

This ends our case analysis and concludes the proof of our claim. 
\qed\end{proof}
 
\section{Enumeration}
\label{sec:enumeration}


We can now formally describe our algorithm. Its input is a positive integer $m$, the directed graph $G = (V, E)$ with $n$ vertices, and the vertex $v_0$. We want to enumerate every walk of length $m$ starting at $v_0$.

We first preprocess the graph $G$ using the algorithms from Section \ref{sec:data_structures}, including those from Lemmas \ref{lem:max_walk_length}, \ref{lem:components}, \ref{lem:max_node_on_walk}. This takes $O(\vert E \vert )$ and we obtain all data structures defined in Section \ref{sec:data_structures}, including, in particular, data structures allowing us to retrieve in $O(1)$ time the answer to $\PMN$ queries. Note that this preprocessing time does not depend on $m$.

We also define two empty stacks ${\mathcal S}$ and ${\mathcal C}$. 

These are maintained as global variables during the execution of our algorithm. The stack ${\mathcal S}$ contains tuples $((v, u), \ell)$, with $(v, u) \in E, \ell\in [m]$. Intuitively, if the content of the stack ${\mathcal S}$ is, at some step of the computation the sequence $\langle(\uparrow,v_0,\ell_0), ((u_0, v_1),\ell_1), \ldots, ((u_{t-1},v_t),\ell_t)\rangle$ (where the top of the stack is to the right of this sequence), then the currently enumerated walk $\pi$ is $(v_0, \ell_0) (u_1, v_1) (v_1, \ell_1) \cdots (v_{t - 1}, \ell_{t - 1}) (u_{t - 1}, v_{t}) (v_t, \ell_t)$, where, for $i\geq 0$, $(v_{i}, \ell_i)$ is the default walk of length $\ell_i$ starting at $v_i$ and ending with the vertex $u_i$. The walk $\pi$ can be retrieved explicitly from its representation on the stack ${\mathcal S}$ in linear time. With respect to the execution of our algorithm, ${\mathcal S}$ corresponds to the stack of currently active recursive calls. 

The usage of ${\mathcal C}$ is more subtle. The stack ${\mathcal C}$ contains tuples $((v, u),\ell)$, with $(v, u) \in E, \ell\in [m]$, with the property that each such tuple also occurs in ${\mathcal S}$; therefore, for each tuple, we will also store, together with it, in the stack ${\mathcal C}$ a pointer to the corresponding record of ${\mathcal S}$. Intuitively, at every step of the computation, ${\mathcal C}$ contains (bottom to top, in the same order as in ${\mathcal S}$) exactly those tuples $((v, u),\ell)$ of ${\mathcal S}$ for which the default walk of length $\ell$ starting in $u$ still contains branching vertices leading to walks of length $m$ which were not enumerated yet. As such, ${\mathcal C}$ allows for the quick identification of the next walk which we need to output in our enumeration: such a walk should go through one of the branching points of the default walk found on top of this stack ${\mathcal C}$. From the point of view of the execution of our algorithm, ${\mathcal C}$ corresponds to the currently active recursive calls, which were not tail calls. Recall that we assume that, in the computational model we use, tail calls are implemented so that no new stack frame is added to the call stack. Hence, ${\mathcal C}$ actually corresponds, at each moment of our algorithm's execution, to the current call stack.

As mentioned in the informal description of our approach in Section \ref{sec:enumeration}, a recursive procedure, named $\ennumerate$, is central to our approach. At each call, $\ennumerate$ takes two parameters, an edge $ (v, u) \in E \cup\{\uparrow\}$, and a length $\ell$. Referring to our intuitive explanation, we now have to go through all walks starting from $u$ and having length $\ell$. Each will lead to a walk of length $m$ that we need to output.

Now we will describe how the call $\ennumerate((v, u),\ell)$ works. The pseudo-code of this procedure is given in Algorithm \ref{alg:enumerate}.

\begin{algorithm}[H]
    \caption{The main recursive procedure of our enumeration algorithm.}
    \label{alg:enumerate}
\begin{algorithmic}[1]
\Procedure{$\ennumerate$}{Edge $(v,u)$ of $G$, length $\ell$}
\State Push $((v, u),\ell)$ in ${\mathcal S}$.
\State Push (stack frame of this call, pointer to the top element of ${\mathcal S}$) \newline
 \hspace*{40pt} as the top element of ${\mathcal C}$.
\State Output $(m-\ell-1, (v, u), \ell)$, which describes the current walk \newline
\hspace*{40pt} w.r.t. the previous one.
\State Return, if $\ell=0$.
\State Let $U$ be an empty list.
\State Let $(v',d)=\PMN(u,\ell)$. 
\If{$(d+w_v)\geq \ell$}
	\State Add $((u,\ell),u,0)$ to $U$.
	\State Let $((v', u'),e)$ be the second element of $L_{v'}$.
	\State Let $last=((v', u'),\ell-d-1)$. 
\Else
	\State Remove the elements of  ${\mathcal S}$ and ${\mathcal C}$ corresponding to this call.
	\State Return.
\EndIf
\While{$U$ is not empty}
	\State Extract the first tuple $((s,j),u,h)$ from $U$. Let $(v',f)=\PMN(s,j)$. 
	\State Add $((s,f),u,h)$ to $U$, \newline
            \hspace*{40pt} if $f>0$, $(r,e)=\PMN(s,f)$ and $h+e+1+w_r\geq \ell$. 
	\State Add $((v'',j-f-1),u,h+f+1)$ to $U$, where $v''$= successor of $v'$ on $(s,j)$, 
        \newline 
        \hspace*{40pt}if $j-f-1>0$, $(r,e)=\PMN(v'',f)$ and $h+f+1+e+1+w_r\geq \ell$.
	\If{$((s,j),u,h)\neq ((u,\ell),u,0)$}
		\State $x=2$
	\Else
		\State $x=3$
	\EndIf
	\State Let $(b,g)$ be the $x^{th}$ element of $ L_{v'}$ ($(b,g)$ is undefined if $\vert L_{v'}\vert<x$)
	\While{$(b,g)$ is defined and $h+f+g +1\geq \ell$}
		\State $\ennumerate(b,v',\ell-h-f-1)$.
		\State Set the top element of ${\mathcal S}$ to the element of ${\mathcal S}$ pointed \newline
  \hspace*{7cm} by the top element of ${\mathcal C}$. 
		\State Let $(b,g)$ be the next element of $ L_{v'}$ ($(b,g)$ is undefined if $\vert L_{v'}\vert<x$)
	\EndWhile
\EndWhile
\State Pop the stack frame of $\ennumerate((v,u),\ell)$, and the attached pointer, from ${\mathcal C}$.
\State $\ennumerate(last),$ where $last=((v', u'),\ell-d-1)$
\State Return.
\EndProcedure
\end{algorithmic}\label{algo:code}
\end{algorithm}

We begin by pushing $((v, u),\ell)$ in the stack ${\mathcal S}$. Furthermore, an element containing a pointer to the element just pushed in ${\mathcal S}$ as well as the stack frame of the call $\ennumerate((v, u),\ell)$ is pushed in ${\mathcal C}$.\looseness=-1  

Then, we output the tuple $(m-\ell-1,(v, u),\ell)$, encoding walk of length $m$ consisting of the first $m-\ell-1$ edges of the previously output walk, followed by $(v, u)$, and the default walk $(u,\ell)$. 

At this point, we need to see whether there are more walks of length $\ell$ starting with $u$ that we need to go through. Each of these walks will be obtained via a recursive call of $\ennumerate$, and it is important to make sure that the last walk we discover in this way is obtained via a tail call. To facilitate the discovery of walks, we will maintain a list $U$, whose elements are of the form $((s,j),u,d)$, where $(s,j)$ is a default walk on which there is at least one branching vertex that needs to be explored, and $(s,j)$ is part of the default walk starting from $u$ such that one needs to traverse $d$ edges to get from $u$ to $s$ on this walk. $U$ is initially empty. 

Firstly, we detect if there is at least one walk of length $\ell$ from $u$, that we did not enumerate yet. For this, we retrieve $(v',d)=\PMN(u,\ell)$. 

If $d + w_{v'} \geq \ell$, we add the tuple $((u,\ell),u,0)$ to $U$; this means that there is a branching vertex on the default walk $(u,\ell)$ from which we can obtain at least one new walk. One of these new walks will be certainly obtained by calling $\ennumerate((v', u'),\ell-d-1)$, where $((v', u'),e)$ is the second element of $L_{v'}$, i.e., the longest walk starting with a non-default edge from $v'$, starting with the edge $(v', u')$ and continues with the default walk from $u'$. Note that the fact that this walk is long enough is guaranteed by the truth of the inequality $d + w_{v'} \geq \ell$. Now, we have discovered at least one recursive call that we need to do in the current instance of $\ennumerate$. However, we will not execute this call immediately. Instead, we will postpone this call until the end of the procedure, and this will be the tail call (avoiding a series of tedious checks which would allow us to manage the tail call otherwise). We save in a variable $last$ the tuple $((v', u'),\ell-d-1)$, the parameters of the future tail call.

If $d + w_{v'} < \ell$, then there are no more walks to be explored (the default walk from $u$ was the only one), so we simply return (and remove the elements corresponding to the call $\ennumerate((v, u),\ell)$ from both stacks ${\mathcal S}$ and ${\mathcal C}$).

In the case we added something to $U$, we continue as follows. 

While $U \neq \emptyset$, we extract the first tuple $((s,j),t,h)$ from $U$ and process it as follows. 

Assume first that $((s,j),u,h) \neq ((u,\ell),u,0)$.

We retrieve $(v',f)=\PMN(s,j)$. This is a branching vertex, so we will need to do some recursive calls with it. But first, we will update $U$ as follows. We add to $U$ the element $((s,f),u,h)$, if $f>0$ and $(r,e)=\PMN(s,f)$ and $h+e+1+w_r\geq \ell$. We add to $U$ the element $((v'',j-f-1),q,h+f+1)$, where $v''$ is the successor of $v'$ on the default walk $(s,j)$, if $j-f-1>0$ and $(r,e)=\PMN(v'',f)$ and $h+f+1+e+1+w_r\geq \ell$. The idea is that we will now explore the possible walks originating in $v'$ but we still need to explore the possible walks originating in branching vertices from $(s,f)$, if that walk is non-empty (meaning that $s\neq v'$ or, in other words, $f>0$), or from $(v'',j-f-1)$, if that walk is non-empty (meaning that $j-f-1>0$). Once $U$ is updated, we traverse the list $L_{v'}$ starting with its second element. Let the current element be $((v', u'),g)\in L_{v'}$. We check if $h+f+g +1\geq \ell$ (i.e., we check if there is a long enough walk going on the default walk $(u,\ell)$ through $u$, then $s$, until it reaches $v'$ and then following the edge $(v', u')$, which can lead to a new walk that we need to enumerate). If  $h+f+g +1\geq \ell$, we call $\ennumerate((v', u'),\ell-h-f-1)$; when this call is completed, we update the pointer to the top element of ${\mathcal S}$ to correspond to the element of ${\mathcal S}$ pointed by the top element of ${\mathcal C}$ (i.e., eliminate tail calls from ${\mathcal S}$, as now they also became useless in our representations of walks) and continue our traversal of $L_{v'}$. If If $h+f+g +1< \ell$, we simply stop the traversal of $L_{v'}$. 

If $((u,j),u,h)= ((u,\ell),u,0)$, note that $\PMN(u,\ell)=(v',d)$. We do exactly the same thing as above, except for the fact that we start the traversal of $L_{v'}$ with its third vertex. This is because the second vertex of $L_{v'}$ is $u'$, and we have already saved the call $\ennumerate((v', u'),\ell-d-1)$ to be the tail call of our procedure.

As soon as $U$ is empty, we will simply run the tail call $\ennumerate((v', u'),\ell-d-1)$, and then return. Just before doing that, we remove the stack frame of $\ennumerate((v, u),\ell)$ from the stack ${\mathcal C}$, to simulate the management of tail calls in our model. 

\begin{lemma}
    \label{lem:every_walk}
    The call $\ennumerate((\uparrow,v_0),m)$ outputs a representation of every walk of length $m$ starting at $v_0$ in the graph $G$  exactly once.
\end{lemma}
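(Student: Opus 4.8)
The plan is to argue correctness of $\ennumerate((\uparrow,v_0),m)$ by a careful induction, keyed to the following invariant: a call $\ennumerate((v,u),\ell)$, whenever it is invoked with the stack $\mathcal{S}$ encoding a walk $\rho$ of length $m-\ell-1$ from $v_0$ whose last vertex is $v$ (or, in the initial call, with $\rho$ empty and $u=v_0$), outputs exactly once a representation of every walk of length $m$ starting at $v_0$ whose first $m-\ell-1$ edges are $\rho$, whose next edge is $(v,u)$, and whose remaining $\ell$ edges form an arbitrary walk of length $\ell$ starting at $u$. The base case $\ell=0$ is immediate: the call outputs the single walk $\rho\cdot(v,u)$ (the tuple $(m-1,(v,u),0)$) and returns at line~5. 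For the inductive step, every length-$\ell$ walk from $u$ either is the default walk $(u,\ell)$ itself, or leaves the default walk $(u,\ell)$ at a first branching vertex $v'$ via some non-default edge of $L_{v'}$ and then continues with a length-$\ell'$ walk from the head of that edge; this is exactly the representation-of-a-walk decomposition recalled in Section~\ref{sec:data_structures}, applied to the suffix of length $\ell$. So I would set up a bijection between length-$\ell$ walks from $u$ and (first branching vertex $v'$ on the path from $u$, choice of non-default edge at $v'$, length-$\ell'$ walk from its head), plus the single ``no branching'' case giving the default walk $(u,\ell)$.

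Next I would check that the procedure enumerates precisely this set. The default walk $(u,\ell)$ is handled as the $h=0$ iteration of the initial tuple $((u,\ell),u,0)$ pushed into $U$ (after first checking, via $\PMN(u,\ell)$, that a branching vertex even exists — if not, the default walk is the only one and it has already been output at line~4, so returning is correct). For every candidate branching vertex, the worklist $U$ is used to traverse all vertices of the default walk $(u,\ell)$ that carry a useful branch: each extracted tuple $((s,j),u,h)$ represents a sub-default-walk still to be scanned; $\PMN(s,j)=(v',f)$ locates the vertex on it maximising reach, and the two conditional re-insertions $((s,f),u,h)$ and $((v'',j-f-1),u,h+f+1)$ split the remaining sub-walk into the part before $v'$ and the part after $v'$, re-inserting each only when (checked via a further $\PMN$ query and the $w_r$ weights) it still contains a branching vertex that yields a length-$m$ walk. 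Here I would invoke Lemma~\ref{lem:max_node_on_walk} to justify that these $O(1)$-time $\PMN$ queries correctly identify, for each sub-walk, whether a suitable branch remains and where the maximal one is — and hence that the recursion on $U$ terminates (each split strictly shrinks the sub-walk length $j$) and reaches every branching vertex exactly once. For each such $v'$, the inner loop over $L_{v'}$ (from the second element, or the third when $v'$ is the $\PMN$ vertex of $(u,\ell)$ itself because the second edge is reserved for the tail call) issues $\ennumerate(b,v',\ell-h-f-1)$ for exactly those non-default edges $(v',b)$ whose length label $g$ satisfies $h+f+g+1\ge\ell$, i.e. exactly those that start a walk long enough to complete to length $m$; the recursive hypothesis then gives that this call outputs every length-$m$ walk from $v_0$ with the appropriate shared prefix. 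The line resetting the top of $\mathcal{S}$ after each recursive call undoes the bookkeeping of tail calls so the encoded walk is kept in its canonical (maximal-default-walk) form.

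Finally I would account for the postponed call. The very first branching edge found, $((v',u'),e)$ the second element of $L_{v'}$ with $(v',d)=\PMN(u,\ell)$, is not explored in the loop but stored in $last$ and executed as the tail call at line after popping $\mathcal{C}$; since the inner loop over $L_{v'}$ for the tuple $((u,\ell),u,0)$ deliberately starts at the third element, this edge is covered exactly once, by the tail call, and the walks it generates are — by induction — all distinct from those generated by every other edge (different next edge) and from the default walk $(u,\ell)$ (different shape). Putting the pieces together: distinctness across the three sources (the default walk, the loop-issued recursive calls, the tail call) follows because their generated walks differ in the identity of the first branching vertex or in the non-default edge taken there; completeness follows because the $U$-recursion visits every branching vertex and the $L_{v'}$-scan visits every sufficiently-long non-default edge at each; and ``exactly once'' follows from the decomposition bijection above together with the induction. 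Applying the invariant to $\ennumerate((\uparrow,v_0),m)$ with $\rho$ empty, $u=v_0$, $\ell=m$, gives the statement.

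The main obstacle I anticipate is the precise verification that the conditions guarding the re-insertions into $U$ and the loop over $L_{v'}$ — the inequalities involving $d$, $f$, $g$, $h$, $e$, and the weights $w_r,w_{v'}$ — exactly characterise ``there still exists an un-enumerated length-$m$ walk through here'', neither skipping a walk nor triggering a vacuous recursive call (which would be fine for correctness but must be ruled out for the constant-delay analysis elsewhere); this is where Lemma~\ref{lem:max_node_on_walk} and the defining property of $\PMN$ must be applied with care, tracking the offsets $h$ (distance from $u$ to $s$) correctly through every split. A secondary subtlety is confirming that the stack manipulations (the pointer from $\mathcal{C}$ into $\mathcal{S}$, the resets after recursive calls, the pop before the tail call) genuinely maintain the claimed encoding of the current walk at all times, so that the ``exactly once'' and the output format $(m-\ell-1,(v,u),\ell)$ are meaningful.
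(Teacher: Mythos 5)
Your proposal is correct and follows essentially the same route as the paper's proof: an induction over the recursion tree showing that each call $\ennumerate((v,u),\ell)$ is responsible for exactly the length-$m$ walks extending a fixed prefix, with completeness coming from the $\PMN$-driven splitting of the default walk $(u,\ell)$ into sub-walks to locate all branching vertices (Lemma~\ref{lem:max_node_on_walk}), and uniqueness coming from the unique decomposition of any walk into alternating maximal default walks and non-default edges. The paper organizes this as three separate claims (soundness of calls by induction on $m-\ell$, correctness of branching-vertex identification, and uniqueness of the call sequence for each walk) rather than a single invariant, but the substance is the same.
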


\begin{proof}
Note first that if we call $\ennumerate((v, u),\ell)$ at some point during the execution of our algorithm, then there is a walk of length $m-\ell$ ending with the edge $(v,u)$ (or the walk representing the empty walk if $v = \uparrow$) that leads from $v_0$ to $u$. This can be shown by induction on $m - \ell$. If $m - \ell = 0$, the conclusion follows by the fact that the only time we call $\ennumerate$ with the second parameter being equal to $m$ is in the first call $\ennumerate((\uparrow,v_0),m)$. Also note that no other call is made during the execution of $\ennumerate((\uparrow,v_0),m)$ which has the edge parameter $(\uparrow, v_0)$. Now, assume that the claim holds for $m - \ell \leq k$, and we will show it for $m - \ell = k+1$. In this case, the call $\ennumerate((v, u),\ell)$ was initiated by a previous recursive call $\ennumerate((v', u'),\ell+t)$, for some $t>0$. This means that the vertex $v$ is on the default walk $(u',\ell+t)$, at distance $t-1$ from $u'$, and the edge from $v$ to $u$ is non-default. The conclusion follows immediately.

By this claim, we immediately get that every walk that is output during the execution of the call $\ennumerate((\uparrow,v_0),m)$ represents a walk of length $m$ of $G$. 

We now show that during the call $\ennumerate((v, u),\ell)$ we correctly identify all branching vertices on the walk $(u, \ell)$; that is, all vertices $v'$ from which we can follow a non-default edge and extend the walk leading from $v_0$ to $u$ and then to $v'$ to a walk of length $m$. Firstly, for such a vertex to exist, then $d+w_{v'} \geq \ell$ must hold for $(v',d)=\PMN(u,\ell)$. If $(v',d)$ does not fulfil this requirement, there is no other walk of length $\ell$ starting in a vertex of the default walk $(u,\ell)$, and our algorithm reports that correctly. So, our algorithm considers $v'$ as the first branching vertex. If the walk $(u,\ell)$ had length $1$, then we are done. If our walk is longer, after correctly identifying the branching vertex $v'$, such that $(v',d)=\PMN(u,\ell)$, our algorithm looks for branching vertices on the default walks $(u,d)$ (from $u$ to $v'$) and $(v'',\ell-d-1)$, which starts with the successor of $v$ on the default walk $(u,\ell)$. Basically, we have partitioned the walk $(u,\ell)$ in three parts: the branching vertex and two shorter walks, on which we keep looking for branching vertices. The search on each of these walks is done exactly like the search on the whole walk $(u,\ell)$, with $\PMN$ queries, which is correct. Ultimately, we either rule out entire walks because not even the vertex returned by $\PMN$ leads to a long enough walk, or we split them further. Eventually, in at most $\ell$ splitting steps, each vertex of the default walk $(u,\ell)$ is either correctly ruled out or discovered as a branching vertex.

It only remains to show that each walk of length $m$ from $\mathcal{P}$ is only enumerated once, so we do not output two triples representing the same walk from $\mathcal{P}(m)$. This follows from the fact that a walk $\pi \in \mathcal{P}(m)$ has a unique decomposition as a concatenation of default walks and connecting non-default edges. As above, let $\pi= (v_0, \ell_0) (v_1, u_1) (u_1, \ell_1) (v_2, u_2) (u_2, \ell_2) \ldots (v_{t}, u_{t}) (u_t, \ell_t)$, where $(u_{i}, \ell_i)$ is the default walk of length $\ell_i$ between the vertices $u_{i}$ and $v_{i + 1}$, for $i\in \{0\}\cup [t]$, and $(v_{i}, u_i)$ is a non-default edge between $v_{i}$ and $u_{i}$, for $i\in [t]$. Basically, to obtain the walk $\pi$, we first call $\ennumerate(\uparrow,v_0,m)$. Then, the next call of $\ennumerate$ must be $\ennumerate((v_1, u_1), m - \vert\ell_0\vert -1)$ 
(there is no other way to choose a branching vertex than choosing the one at the end of the longest common prefix of the first walk $\pi$ and the default walk $(v_0,m)$). By an inductive process, we see that actually the calls of $\ennumerate$ that lead to the output of the representation of $\pi$ are uniquely determined. Therefore, a representation of this string is only output once.

This concludes our proof.
\qed\end{proof}

\begin{theorem}
\label{thm:complexity_fixed_length}
Given integer $m$, directed graph $G = (V, E)$, and vertex $v_0$, we can enumerate, without repetitions, succinct representations of all walks of length $m$ starting from $v_0$ in $G$ with $O(1)$-delay, after an $O(\vert E \vert )$-time preprocessing.\looseness=-1
\end{theorem}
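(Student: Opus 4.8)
The plan is to assemble Theorem~\ref{thm:complexity_fixed_length} from the pieces already in place: Lemma~\ref{lem:every_walk} establishes correctness (every walk of length $m$ from $v_0$ is output exactly once, and everything output is such a walk), and the preprocessing bound $O(\vert E\vert)$ follows by simply chaining Lemmas~\ref{lem:max_walk_length}, \ref{lem:components}, and \ref{lem:max_node_on_walk}, none of which depends on $m$. So the only genuine work left is the \emph{delay} analysis: showing that between any two consecutive \textsc{Output} statements executed by $\ennumerate((\uparrow,v_0),m)$, the algorithm performs only $O(1)$ operations in our RAM model.

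First I would make precise what ``delay'' means here given the recursion: the computation is a traversal of the tree of recursive calls, and an \textsc{Output} happens exactly once at the top of each call (line~4). Between two consecutive outputs, the algorithm either (a) descends directly into a recursive $\ennumerate$ call — this is $O(1)$ work, since lines 1--7 do a constant number of stack pushes, one $\PMN$ query (Lemma~\ref{lem:max_node_on_walk}), and constant-time list/$L_v$ accesses before the next call outputs immediately; or (b) finishes a call (reaching line~5 return, or the $d+w_{v'}<\ell$ return, or exhausting the $U$-loop) and must ``climb back up'' through possibly several completed calls before finding the next branch to explore and issuing the next output. Case (b) is the crux: a naive climb could traverse $\Theta(m)$ stack frames. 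The key claim to prove is that this climbing is amortized — or better, worst-case bounded — by the structure of the stacks ${\mathcal S}$ and ${\mathcal C}$.

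The hard part will be arguing the worst-case (not merely amortized) $O(1)$ bound on the climb. The intended argument, which I would spell out, rests on the invariant stated in the text about ${\mathcal C}$: at every moment ${\mathcal C}$ contains exactly those frames whose associated default walk $(u,\ell)$ still has an unexplored branching vertex leading to a length-$m$ walk. Because the final recursive call in each $\ennumerate$ body (the $last$ call on line with $\ennumerate(last)$) is a tail call, its frame is popped from ${\mathcal C}$ \emph{before} it is made (the line ``Pop the stack frame \dots from ${\mathcal C}$''), so tail calls never accumulate on ${\mathcal C}$. Consequently, when a call returns, control returns to the nearest enclosing non-tail caller, i.e.\ to the current top of ${\mathcal C}$; and that caller, by the ${\mathcal C}$-invariant, still has a pending branch, so it can immediately proceed (its inner \textbf{while} loop over $L_{v'}$, or the $U$-loop, has a next item ready) to a fresh recursive call that outputs right away. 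Thus each return is followed by $O(1)$ bookkeeping (resetting the ${\mathcal S}$-top via the ${\mathcal C}$-pointer, advancing one position in an $L_{v'}$ list or extracting one element of $U$) and then the next output. I would need to check the one subtle point: that extracting an element of $U$ and the subsequent $\PMN$ queries and the guard checks $h+f+g+1\ge\ell$ etc.\ are all $O(1)$, and that a $U$-extraction that adds new elements to $U$ but finds no valid element in $L_{v'}$ (empty inner while) cannot happen more than $O(1)$ times before an output — this follows because each such extraction strictly shrinks the total length of the default walks tracked in $U$ (the split into $(s,f)$ and $(v'',j-f-1)$ removes the branching vertex $v'$), but to get a true constant bound I would instead observe that elements are only \emph{added} to $U$ when the corresponding $\PMN$-based inequality guarantees a genuine branching vertex exists, hence at least one recursive call (hence output) will follow; a $U$-element with no usable $L_{v'}$ entry beyond position $x$ is excluded by those very guards. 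I would write this as a short sequence of invariants (content of ${\mathcal S}$ encodes the current walk; content of ${\mathcal C}$ = frames with pending branches; every item pushed to $U$ yields $\ge 1$ output), then conclude that the number of primitive operations between consecutive outputs is bounded by a fixed constant, and combine with the preprocessing bound to finish.
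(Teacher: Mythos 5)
Your proposal is correct and follows essentially the same route as the paper's proof: correctness and the $O(\vert E\vert)$ preprocessing are delegated to Lemma~\ref{lem:every_walk} and Lemmas~\ref{lem:max_walk_length}--\ref{lem:max_node_on_walk}, and the delay analysis splits into the direct-descent case and the return-and-climb case, with the climb bounded to $O(1)$ via the tail-call convention and the stack ${\mathcal C}$ returning control to the nearest enclosing caller that still has a pending branch. Your additional care about ``unproductive'' extractions from $U$ (the guards ensuring each element added to $U$ leads to at least one further recursive call, hence an output) addresses a point the paper's proof passes over silently, so it is a refinement of, not a departure from, the same argument.
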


\begin{proof}
We build, for the graph $G$, the data structures from Section \ref{sec:data_structures}, as a preprocessing, and then run $\ennumerate((\uparrow,v_0),m)$, if $\pi(v_0)\geq m$.

To show that the enumeration performed by the algorithm is done with constant delay, it is enough to show that the operations performed between two consecutive output operations can be executed in constant time.

Each output operation is associated to a call of $\ennumerate$. So, let us consider two such output operations, caused by two distinct calls to $\ennumerate$. There are several cases we need to check. 

In the first case, the first output operation is done in a (parent-)instance of $\ennumerate$, which then directly calls the instance of $\ennumerate$ in which the second output is performed. This means that the operations performed between these calls are those from lines 5--15 of Algorithm \ref{alg:enumerate} followed by exactly one execution of lines 17--25 from the while-loop of line 16; all these are done in the parent instance. With the data structures defined in Section \ref{sec:data_structures}, these can all be executed in $O(1)$ time.

In the second case, both output operations are done in instances of $\ennumerate$ originating (maybe not directly, but in a sequence of recursive calls) from the same parent instance of $\ennumerate$. The idea is that once the first output operation is performed, the instance that executed this operation will not make any other recursive call (as this would cause an intermediate output operation) before it ends. Once it ends (and, note at this point that getting from the output in line 4 to any return instruction requires only $O(1)$ time, with our data structures, if no further recursive calls are made), exactly the next call of $\ennumerate$ must be the one that causes the second output-operation (or, again, we would have intermediate output operations). So, the algorithm directly returns to the parent instance (the origin of the two calls leading to the two consecutive output operations), and this means that every intermediate call that led from the parent instance to the one causing the first output was a tail call. So, this return to the parent instance is done in $O(1)$ time in our setting. Now, looking at the recursive call that led to the instance of $\ennumerate$ which produced the first output and at the one that leads to the second output, it must hold that they happen in two consecutive iterations of the while-loop from line 26 of the algorithm, or the first happens in the last iteration of that loop, and the second is either the tail call from line 33 or the first call made in a new iteration of the while-loop from line 16 (and in the first iteration of the while loop from line 26 in this new iteration). In both cases, the time required to complete the computations between two such calls is constant (with the help of our data structures). 

To conclude, while the first case is straightforward, the second case is a bit more involved and can be seen like this. The first output is done by an instance of $\ennumerate$, which then returns after $O(1)$ time. This then takes us in $O(1)$ time (due to the management of tail calls) to the parent instance that originated the sequence of calls leading to the first output. Then, we can reach the call of $\ennumerate$ that makes the second output in $O(1)$ time. This is basically the next call to $\ennumerate$ made by the parent instance, after the one that led to the first output; the time needed to execute the instructions between two such calls is $O(1)$. 
Note also that after the final output, we have an empty stack and thus can determine in $O(1)$ time that no walks starting at $v_0$ that have not been enumerated, and therefore can terminate the algorithm.

There are no other cases to be considered, so our claim follows.
\qed\end{proof}

It is worth noting that the space used by our algorithm (on top of the $O(|E|)$-space used by the data structures produced during the preprocessing) is upper bounded, at any point, by the length of the currently enumerated walk. We can immediately extend Theorem \ref{thm:complexity_fixed_length} to output every walk of length $m$ in $G$: in the preprocessing phase, we collect every vertex with a default walk of length at least $m$, and then use the algorithm of Theorem \ref{thm:complexity_fixed_length} for each such vertex.\looseness=-1

\begin{theorem}
    \label{thm:any_starting_vertex}
    Given integer $m$ and directed graph $G = (V, E)$, we can enumerate, without repetitions, succinct representations of all walks of length $m$ in $G$ with $O(1)$-delay, after an $O(\vert E \vert )$-time preprocessing.
\end{theorem}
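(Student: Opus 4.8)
The plan is to reduce Theorem~\ref{thm:any_starting_vertex} to Theorem~\ref{thm:complexity_fixed_length} by iterating the fixed-starting-vertex enumeration over an appropriately chosen list of starting vertices. First I would note that a walk of length $m$ in $G$ is uniquely determined by its starting vertex together with the sequence of its edges, so the set of all length-$m$ walks in $G$ is the disjoint union, over all $v\in V$, of the set of length-$m$ walks starting at $v$. Hence, if for each $v$ we run the procedure $\ennumerate((\uparrow,v),m)$, we will output (a succinct representation of) each length-$m$ walk of $G$ exactly once, with no repetitions across different starting vertices, since walks starting at distinct vertices are distinct.

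The next step is to ensure that we only invoke the enumeration for starting vertices $v$ that actually admit a walk of length $m$, i.e. those with $\pi(v)\ge m$; this is what makes the delay bound go through. During the preprocessing of Lemma~\ref{lem:max_walk_length} we already compute $\pi(v)$ for every vertex, so at that time (at no extra asymptotic cost, $O(|E|)$ total) I would additionally build a list $V_m=\{v\in V : \pi(v)\ge m\}$. The preprocessing of Section~\ref{sec:data_structures} (Lemmas~\ref{lem:max_walk_length}, \ref{lem:components}, \ref{lem:max_node_on_walk}) is done once, is independent of the choice of starting vertex, and takes $O(|E|)$ time as established; so the total preprocessing remains $O(|E|)$.

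For the enumeration itself, I would iterate through $V_m$: for the current vertex $v\in V_m$, run $\ennumerate((\uparrow,v),m)$ exactly as in the proof of Theorem~\ref{thm:complexity_fixed_length}; when that call terminates (which, as noted there, is detectable in $O(1)$ time since the stacks $\mathcal{S}$ and $\mathcal{C}$ become empty), advance to the next vertex of $V_m$ and restart. By Lemma~\ref{lem:every_walk} each individual run outputs every length-$m$ walk starting at $v$ exactly once, and by Theorem~\ref{thm:complexity_fixed_length} it does so with $O(1)$-delay after the (shared) preprocessing. The only genuinely new point to check — and the one I expect to be the main (minor) obstacle — is the delay at the ``seams'': between the last output of the run for $v$ and the first output of the run for the next vertex $v'\in V_m$. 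Here I would argue that finishing the run for $v$ takes $O(1)$ time from its last output (empty stacks, immediate termination), moving to $v'$ in the list takes $O(1)$ time, and the first output of $\ennumerate((\uparrow,v'),m)$ occurs on line~4, reached in $O(1)$ time after pushing onto $\mathcal{S}$ and $\mathcal{C}$; since $\pi(v')\ge m$ guarantees the default walk $(v',m)$ exists, this first call is legitimate and produces output immediately. Thus the worst-case delay remains $O(1)$ throughout, and combined with the $O(|E|)$ preprocessing this proves the theorem. \qed
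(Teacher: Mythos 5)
Your proposal is correct and follows essentially the same route as the paper: run the fixed-starting-vertex enumeration of Theorem~\ref{thm:complexity_fixed_length} once per vertex admitting a walk of length $m$, sharing the $O(|E|)$ preprocessing, and check that the seam between consecutive runs costs only $O(1)$. The only cosmetic difference is that the paper pre-sorts the vertices by default-walk length (so the relevant starting vertices form a prefix of a single $m$-independent list), whereas you filter a list $V_m$ for the given $m$; both yield the claimed bounds.
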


\begin{proof}
    Note that a list $Q = (v_1, v_2, \dots, v_n)$ containing all vertices in $G$ ordered by length of default walk can be constructed in $O(n)$ time via count sort. Therefore, by making the series of calls $\ennumerate((\uparrow, v_1), m),$ $\ennumerate((\uparrow, v_2), m),$ $\ldots,$ $\ennumerate((\uparrow, v_{\ell}), m)$, where $\ell$ is the largest index such that the default walk from $v_{\ell}$ has length $m$, every walk in $G$ of length $m$ is output exactly once. As there is only a constant delay between the last walk output by $\ennumerate((\uparrow, v_i), m)$, and the first output by $\ennumerate((\uparrow, v_{i + 1}), m)$, the delay of this algorithm is $O(1)$ as well, proving the theorem.
\qed\end{proof}

We can also enumerate for integers $\ell \leq m$, after exactly the same preprocessing (which is independent of the length of the enumerated walks), representations of the walks of $G$ with length between $\ell$ and $m$.\looseness=-1

\begin{theorem}
\label{thm:complexity_var_length}
Given two integers $\ell \leq m$ and directed graph $G = (V, E)$, we can enumerate, without repetitions, the walks of length at least $\ell$ and at most $m$ in $G$, in increasing order of their length, with $O(1)$-delay, after an $O(\vert E \vert )$-time preprocessing.
\end{theorem}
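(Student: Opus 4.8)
The plan is to reduce the variable-length enumeration to repeated applications of Theorem~\ref{thm:any_starting_vertex}, one length at a time, taking care that the preprocessing is done only once and that the transitions between consecutive lengths incur only $O(1)$ delay. First I would observe that the preprocessing described in Section~\ref{sec:data_structures} (Lemmas~\ref{lem:max_walk_length}, \ref{lem:components}, \ref{lem:max_node_on_walk}) is completely independent of the length parameter: it only builds the default graph $D(G)$, the lists $L_v$, the component structure, the level-ancestor structures, and the data structures answering $\PMN$ queries. Hence a single $O(\vert E\vert)$-time preprocessing phase suffices to later run the enumeration algorithm for \emph{any} target length $k$.

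Next I would handle the enumeration itself. For each length $k$ running from $\ell$ up to $m$, we want to output all walks of $G$ of length exactly $k$. For a fixed $k$, Theorem~\ref{thm:any_starting_vertex} already gives us exactly this with $O(1)$-delay, by making the calls $\ennumerate((\uparrow,v_1),k),\dots,\ennumerate((\uparrow,v_{\ell_k}),k)$ where $v_1,\dots,v_n$ are the vertices sorted by the length $\pi(v)$ of their longest default walk and $\ell_k$ is the largest index with $\pi(v_{\ell_k})\ge k$. The crucial point is that the sorted list $Q=(v_1,\dots,v_n)$ and, for each $k$, the prefix of $Q$ consisting of vertices with $\pi(v)\ge k$, can all be obtained from a \emph{single} count-sort in $O(n)$ time during preprocessing (store, for each value $j\in[n]\cup\{\infty\}$, the block of vertices with $\pi(v)=j$; the admissible starting vertices for length $k$ are exactly those in blocks $j\ge k$, which form a contiguous suffix of $Q$, and moving from $k$ to $k+1$ only drops the block $j=k$). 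Thus, when we finish enumerating all length-$k$ walks, we can in $O(1)$ time identify the first admissible starting vertex for length $k+1$ and fire the next $\ennumerate$ call.

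The only genuinely new point to argue — and I expect it to be the main (though still mild) obstacle — is that the delay remains $O(1)$ \emph{across} the boundaries between lengths, i.e.\ between the last walk of length $k$ and the first walk of length $k+1$, and between the last walk produced by $\ennumerate((\uparrow,v_i),k)$ and the first produced by $\ennumerate((\uparrow,v_{i+1}),k)$. The latter is already covered by the proof of Theorem~\ref{thm:any_starting_vertex}. For the former: after the last $\ennumerate((\uparrow,v_{\ell_k}),k)$ call for length $k$ returns, the stacks ${\mathcal S}$ and ${\mathcal C}$ are empty (as noted after the proof of Theorem~\ref{thm:complexity_fixed_length}), so in $O(1)$ time we increment $k$, consult the precomputed block structure to find whether any starting vertex remains (if $k>m$ we stop; if the block list for lengths $\ge k+1$ is empty we also stop), and start the first call for length $k+1$. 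A small bookkeeping detail is the output format: the succinct representation from Algorithm~\ref{alg:enumerate} encodes a walk relative to the previous one via a shared-prefix length; when the length changes we simply emit the first walk of the new length with shared-prefix length $0$ (equivalently, as an absolute representation), which is consistent with how the very first walk of the whole enumeration is already output. Since each length $k\in\{\ell,\dots,m\}$ contributes at least one walk exactly when $Q$ has a vertex with $\pi(v)\ge k$ and otherwise is skipped in $O(1)$ time, and all within-length and between-length transitions take $O(1)$ time, the overall delay is $O(1)$, and walks are output in nondecreasing order of length and without repetition (within each length by Theorem~\ref{thm:any_starting_vertex}, and across lengths trivially since the lengths are distinct). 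This completes the argument.
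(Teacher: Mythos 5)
Your proposal is correct and follows essentially the same route as the paper's own proof: a single length-independent preprocessing phase, the sorted list $Q$ of vertices by default-walk length, and then running the fixed-length enumeration for each $i$ from $\ell$ to $m$ over the admissible prefix of $Q$, with the $O(1)$ between-length transition justified by the empty stacks/tail-call management. Your extra detail on the block structure of $Q$ and the output format at length boundaries only makes explicit what the paper leaves implicit.
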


\begin{proof}
We build, for the graph $G$, the data structures from Section \ref{sec:data_structures}, as a preprocessing. This takes $O(\vert E \vert )$ time. Again, note that this phase does not depend on $\ell$ or $m$. Additionally, we build the list $Q$ ordering the vertices in $V$ by length of default walk from longest to shortest. We then run, for $i$ from $\ell$ to $m$, the procedure $\ennumerate((\uparrow,v),i)$ for every $v \in Q$ such that the default walk from $v$ has length at least $\ell$. Note that we can check in $O(1)$ time if $v$ has no walk of length $i$ by looking up the first element of $L_v$.
The delay when moving from one length to another is constant, as we always end the enumeration of the walks of some length with a tail call and start the enumeration for the next length with the default walk of that length, starting in $v$. 
\qed\end{proof}

\paragraph{Applications for Automata.}
The result of Theorem \ref{thm:complexity_fixed_length} can be immediately applied to prefix closed regular languages (PCLs), given by the prefix closed automata (PCA, incomplete deterministic finite automata with final states only) accepting them. For this, we represent the input PCA $A$ as a directed, labelled multi-graph, $G(A)$, and enumerate all walks of length $m$ starting at the vertex $v_0$ corresponding to the initial state in $A$. Our algorithm still works without any change because between two vertices of $G(A)$ we have at most one edge with a certain label (although we might have multiple edges), and all our data structures can be extended canonically to this setting. As there is a bijective correspondence between the walks in $G(A)$ and the strings of $L(A)$, the following theorem follows.\looseness=-1

\begin{theorem}
    \label{thm:enum_PCA}
    Given integer $m$ and PCA $\mathcal{A}$, we can enumerate, without repetitions, succinct representations of all strings of length $m$ of $L(\mathcal{A})$ with $O(1)$-delay after an $O(\vert \mathcal{A} \vert )$-time preprocessing.
\end{theorem}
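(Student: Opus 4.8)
The plan is to reduce Theorem~\ref{thm:enum_PCA} entirely to Theorem~\ref{thm:complexity_fixed_length} (equivalently, its single-start-vertex form) by exploiting the standard identification of a DFA with its transition multi-graph, as already set up in Section~\ref{app:defs}. First I would take the input PCA $\mathcal{A}=(Q,\Sigma,q_0,Q,\delta)$ and form the directed, labelled multi-graph $G(\mathcal{A})$ with vertex set $Q$: for each pair $(q,a)$ on which $\delta$ is defined, we add an edge $(q,\delta(q,a))$ carrying label $a$. Because $\mathcal{A}$ is deterministic, no two edges out of the same vertex share a label, so $G(\mathcal{A})$ has out-degree at most $\sigma$ at each vertex and at most one edge with a given label between any ordered pair of vertices; its size is $\Theta(|\mathcal{A}|)$, and it can be built in $O(|\mathcal{A}|)$ time directly from the representation of $\delta$. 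I would then designate $v_0$ as the vertex corresponding to $q_0$.

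The key structural observation, already noted at the end of Section~\ref{app:defs}, is the bijection between walks of length $m$ in $G(\mathcal{A})$ starting at $v_0$ and strings of length $m$ in $L(\mathcal{A})$: a walk $((v_0,v_1),(v_1,v_2),\ldots,(v_{m-1},v_m))$ maps to the string obtained by concatenating the labels of its edges, and since every state of $\mathcal{A}$ is final, every such walk corresponds to an accepted string, and conversely every string $w\in L(\mathcal{A})$ of length $m$ traces a unique walk via $\delta$. This bijection is label-preserving and computable edge-by-edge in $O(1)$ per edge. Hence running the enumeration algorithm of Theorem~\ref{thm:complexity_fixed_length} on $(G(\mathcal{A}),m,v_0)$ enumerates succinct representations whose explicit expansions (lists of edges) are in bijection with the desired strings; to output the string rather than the edge sequence one simply replaces each edge by its label when materialising the current walk, which does not affect the $O(1)$ delay or the $O(|\mathcal{A}|)$ preprocessing bound.

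The main point that needs a sentence of justification — and which I expect to be the only real obstacle — is that the algorithm and all the data structures of Section~\ref{sec:data_structures} go through \emph{unchanged} in the presence of a multi-graph with labelled parallel edges. I would argue this by inspection: the lists $L_v$, the default edge of each vertex, the default graph $D(G)$, the level-ancestor and range/path-minimum structures, and the stacks $\mathcal{S},\mathcal{C}$ all manipulate \emph{edges as objects} (ordered pairs together with, now, a label), never relying on an edge being uniquely determined by its endpoints; the tie-breaking rule for ordering $L_v$ by target vertex still totally orders the elements since there is at most one edge per label out of $v$ and we may extend the order lexicographically by (length, target, label). Thus $D(G(\mathcal{A}))$ is still a pseudoforest and Lemmas~\ref{lem:max_walk_length}, \ref{lem:components}, \ref{lem:max_node_on_walk} apply verbatim. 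The remaining step is purely bookkeeping: invoke Theorem~\ref{thm:complexity_fixed_length} with $G=G(\mathcal{A})$ and $v_0$ the initial state, note $|E(G(\mathcal{A}))|=O(|\mathcal{A}|)$ and $\pi(v_0)\geq m$ is checked in $O(1)$ from $L_{v_0}$ (if it fails, $L(\mathcal{A})$ has no string of length $m$ and we output nothing), and conclude that all length-$m$ strings of $L(\mathcal{A})$ are enumerated without repetition, with $O(1)$ delay, after $O(|\mathcal{A}|)$ preprocessing, exactly as claimed.
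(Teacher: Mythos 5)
Your proposal is correct and follows essentially the same route as the paper: represent the PCA as the labelled multi-graph $G(\mathcal{A})$, observe the bijection between length-$m$ walks from $q_0$ and length-$m$ strings of $L(\mathcal{A})$, check that the data structures (in particular the lists $L_v$, augmented with labels and a total tie-breaking order so that default edges remain unambiguous) carry over to multi-graphs, and invoke Theorem~\ref{thm:complexity_fixed_length}. The only cosmetic difference is the tie-breaking order in $L_v$ (the paper breaks ties by letter before target vertex, you by target before label), which does not affect correctness.
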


\begin{proof}
As noted in Section \ref{app:defs}, a PCA $A=(Q,\Sigma,q_0,F,\delta)$ can be represented as a directed, labelled multi-graph $G(A)$. Moreover, as PCAs are assumed to be deterministic, $G(A)$ has the property that for each letter $a\in \Sigma$ and pair of states $q_1,q_2\in Q$, there exists at most one edge labelled with $a$ going from the vertex $q_1$ to the vertex $q_2$ in $G(A)$. This property allows us to use our algorithm defined for the enumeration of walks starting from a given vertex in directed graphs to enumerate the walks starting from the vertex $q_0$ in $G(A)$. 

Indeed, it is enough to redefine the lists $L_v$ to be, for each vertex (or state of $A$) $v \in Q$, the list of the triples $((v, u), a, \ell)$, where $\ell\in [n]\cup\{\infty\}$ is the length of the longest walk from $v$ starting with the edge $(v, u)$ in $G(A)$ and $a$ is the letter labelling the edge $(v,u)$. The list $L_{v}$ is ordered in decreasing order of the length-component of its elements, with ties broken firstly according to the letter-component (w.r.t. the order of the letters in $\Sigma$), and, secondly, according to the ordering of the target vertices of the edge-component of these elements, as induced by the ordering on $V$. This makes, again, the definition of default edges unambiguous. Hence, we can simply use the algorithm for walk-enumeration in directed graphs, but note that now, whenever a non-default edge is used or output, the letter labelling it should be made explicit (i.e., output as well).

The result then follows directly from Theorem \ref{thm:complexity_fixed_length}.
\end{proof}

Among PCLs, the class of languages $L_{\mathcal{F}}$, of the strings over $\Sigma$ that do not contain any \emph{forbidden factor} from a finite set of strings ${\mathcal F}$, is of particular interest. It remains open whether the results of Theorem \ref{thm:complexity_fixed_length} and \ref{thm:complexity_var_length} can be improved for such languages. However, when $\mathcal{F}=\{f\}$ the following result holds (according to the construction shown in Section \ref{app:singleFF}). This outperforms the algorithm of~\cite{RuskeyS00}.\looseness=-1

\begin{theorem}
\label{thm:one_word}
Given integer $m$ and string $f \in \Sigma^*$, we can enumerate, without repetitions, succinct representations of the strings of length $m$ over $\Sigma$ which do not contain $f$ as a factor, with $O(1)$-delay, after an $O(\vert f \vert )$-time preprocessing.
\end{theorem}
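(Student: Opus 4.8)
The plan is to reduce Theorem \ref{thm:one_word} directly to Theorem \ref{thm:complexity_fixed_length} (equivalently, to Theorem \ref{thm:enum_PCA}) by exhibiting, for the single forbidden factor $f$ with $|f|=m_f$, a succinct representation of a PCA accepting $L_{\{f\}}$ that can be built in $O(|f|)$ time. The construction is exactly the one sketched in Section \ref{app:singleFF}: start from the Knuth--Morris--Pratt failure function of $f$, which is computable in $O(|f|)$ time \cite{KnuthMP77}; this yields the string-matching automaton $A=(Q,\Sigma,q_0,F,\delta)$ with $Q=\{q_0,\dots,q_{m_f}\}$ recognising all strings ending with $f$, where the only transitions leading to a non-initial state $q_i$ are the ``forward'' transition $\delta(q_{i-1},f_i)=q_i$ and the KMP ``fallback'' transitions, all of which go to states $q_j$ with $j\le i-1$. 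There are only $O(|f|)$ such transitions, and every remaining transition leads to $q_0$; this is what is meant by a \emph{succinct representation} of the automaton, and it is precisely the input format our enumeration algorithm needs, since our preprocessing (Lemmas \ref{lem:max_walk_length}, \ref{lem:components}, \ref{lem:max_node_on_walk}) is linear in the number of edges actually present.

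The next step is to turn $A$ into a PCA $A'$ for $L_{\{f\}}$, following Lemma \ref{lem:Aho_Corasick} specialised to one word: delete the single final state $q_{m_f}$ (and the one transition $\delta(q_{m_f-1},f_{m_f})$ entering it, which becomes a transition into a failure state), make every remaining state final, and keep all transitions not incident to $q_{m_f}$. By the characterisation in Section \ref{app:defs}, $A'$ is an incomplete DFA with all states final, hence a PCA, and the undefined transitions — all of which implicitly lead to $q_0$ except the one just removed — need not be stored explicitly. All of this is $O(|f|)$ work on top of the $O(|f|)$ KMP computation.

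Finally, I would invoke Theorem \ref{thm:enum_PCA} on $A'$: since $A'$ has $O(|f|)$ states and $O(|f|)$ explicitly stored transitions, the preprocessing of Theorem \ref{thm:enum_PCA} runs in $O(|\mathcal{A}|)=O(|f|)$ time, and the enumeration of all strings of length $m$ in $L(A')=L_{\{f\}}$ proceeds with $O(1)$ delay, each string being output succinctly as in the walk-enumeration algorithm. The bijection between walks from $q_0$ in $G(A')$ and strings of $L_{\{f\}}$ of the appropriate length gives exactly the claimed statement.

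I expect no serious obstacle here, as the heavy lifting is done by Theorem \ref{thm:complexity_fixed_length}/\ref{thm:enum_PCA}; the only point deserving care is making sure the automaton is genuinely handed to the enumeration algorithm in a form whose size is $O(|f|)$ rather than $O(|f|\cdot\sigma)$ — i.e.\ that we never materialise the $\Theta(|f|\cdot\sigma)$ transitions going back to $q_0$, but instead treat ``no stored outgoing edge with label $a$'' as ``go to $q_0$''. One should check that this convention is compatible with how the lists $L_v$, the default edges, and the $\PMN$ data structures are computed, which it is, since those structures only ever iterate over the stored (non-trivial) edges plus, for each state, the implicit bulk edge to $q_0$, a constant amount of extra bookkeeping per state.
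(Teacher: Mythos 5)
Your proposal is correct and follows essentially the same route as the paper, which proves Theorem \ref{thm:one_word} exactly by combining the $O(|f|)$-time KMP-based construction of a succinct PCA for $L_{\{f\}}$ from Section \ref{app:singleFF} with the enumeration result of Theorem \ref{thm:enum_PCA}. Your added remark about never materialising the $\Theta(|f|\cdot\sigma)$ implicit transitions back to $q_0$ is precisely the point of the ``succinct representation'' the paper relies on, so there is nothing to correct.
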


\section{Ranking and Unranking}
\label{app:ranking}

We recall the setting of the problem. The \emph{rank} of a string $w$ in a language is the number of strings smaller than $w$ in the language under some ordering. The ranking problem requires computing the rank of a given string $w$.

The unranking operation is the reverse of the ranking operation, taking a number $i$ as the input and asking for the string of rank~$i$. The unranking problem requires computing the string of rank $i$.

In both cases, we consider the order induced by the enumeration algorithm of Section \ref{sec:enumeration}, and we want to show that these two problems can be solved in polynomial time.

As before, we assume that we have a PCA $A = (Q,\Sigma,q_0,Q,\delta)$, where $\vert Q\vert = n$ and $\Sigma = \{1,2, \dots, \sigma\}$. 

We will keep the presentation in this section rather informal, as the technicalities are straightforward.

Recall the description from the main part of the paper, which is based on the enumeration algorithm. The main idea is that both ranking and unranking require identifying a walk in the tree of recursive calls of our $\ennumerate$ function with root $\ennumerate(\uparrow,q_0,m)$. 

In the case of ranking, we identify the walk corresponding to $w$, and the branching vertices occurring on it, and then count the total number of walks of length $m$ corresponding to the leaves of subtrees of recursive calls occurring to the left of this walk (assuming that the recursive calls made by an instance are ordered in the tree left to right according to their call-order). This can be done by running $\ennumerate(\uparrow,q_0,m)$ and simply performing only the recursive calls that correspond to branching vertices on the walk labelled with $w$, and retrieving the number of induced walks for those that should have been called before them.

In the case of unranking, one standard approach is to use the ranking procedure to determine the letters of the searched string one by one, or, more efficiently, and closer to what we have done here so far, one can again run $\ennumerate(\uparrow,q_0,m)$ and we make only those recursive calls which lead to the $i^{th}$ walk of length $m$, in the order of our enumeration. 

Let us go now into more details. 

We will use the following lemma. 
\begin{lemma}[Folklore]
    Let $2 \leq \omega \leq 3$ be the exponent for matrix multiplication and ${\tt w}$ be the size of the memory word in our model.
    Given PCA $A$ and integer $m$, we can compute the number of strings of length $\ell$ starting in a state $q$, for all $q\in Q$ and $\ell\leq m$, in $O\left(\frac{m^2  n^{\omega}\log \sigma}{\tt w}\right)$ time.
\end{lemma}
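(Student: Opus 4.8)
The plan is to set up a dynamic-programming recurrence over the states of the PCA that counts strings (equivalently, walks) of each length, and then batch the computation into matrix powers so that the claimed running time falls out. Let $N_\ell(q)$ denote the number of strings of length exactly $\ell$ that label a walk starting at state $q$ in $A$. Since $A$ is a PCA (all states final, incomplete), every such walk is accepting, so $N_0(q)=1$ for all $q$, and $N_\ell(q)=\sum_{a\in\Sigma:\ \delta(q,a)\text{ defined}} N_{\ell-1}(\delta(q,a))$ for $\ell\geq 1$. Writing $M$ for the $n\times n$ integer matrix with $M[q][q']$ equal to the number of letters $a$ with $\delta(q,a)=q'$ (at most $\sigma$, and at most one per label by determinism), the recurrence is exactly $N_\ell = M\cdot N_{\ell-1}$, i.e. $N_\ell = M^\ell \cdot \mathbf{1}$. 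So what we need is the vector $M^\ell\mathbf 1$ for every $\ell\le m$; equivalently, it suffices to compute the matrices $M, M^2,\dots,M^m$, or just to iterate the matrix-vector product $m$ times — but to get the $n^\omega$ factor we should think in terms of matrix products, and to handle all $\ell\le m$ simultaneously we simply accumulate the powers $M^1,\dots,M^m$ by repeated multiplication by $M$, performing $m-1$ matrix multiplications of $n\times n$ matrices.

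The second ingredient is bit-complexity bookkeeping, which is where the $\log\sigma / {\tt w}$ factor comes from. Entries of $M$ are bounded by $\sigma$, so they fit in $O(\log\sigma)$ bits; entries of $M^\ell$ are bounded by $\sigma^\ell\le\sigma^m$, hence have $O(m\log\sigma)$ bits and occupy $O(m\log\sigma/{\tt w})$ machine words each. A single $n\times n$ matrix multiplication with entries of this size costs $O\!\left(n^\omega\cdot\frac{m\log\sigma}{\tt w}\right)$ time: the $n^\omega$ arithmetic operations of the matrix product are each carried out on numbers of $O(m\log\sigma)$ bits, and additions/multiplications of such numbers cost time proportional to the number of words, $O(m\log\sigma/{\tt w})$. (One should be slightly careful that the schoolbook-to-fast-multiplication reduction still multiplies numbers of this bit-length; treating big-integer multiplication as proportional to word-count, as the paper's computational model does, this is fine.) Doing $O(m)$ such multiplications to build all powers $M^1,\dots,M^m$ gives total time $O\!\left(\frac{m^2 n^\omega\log\sigma}{\tt w}\right)$, which is the claimed bound. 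Finally, reading off $N_\ell(q)$ for all $q$ and all $\ell\le m$ is just extracting the column $M^\ell\mathbf 1$ from each stored power, within the same budget.

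I would also note the trivial base cases and the fact that $M$ itself is built in $O(|A|)$ time by scanning the transition function once, which is dominated by the main cost; and that if $\ell$ exceeds the number of states one cannot avoid the large integers, so the $m\log\sigma/{\tt w}$ word-count factor is genuinely needed and not an artifact. The only real subtlety, and the step I would be most careful about writing up, is the interaction between fast matrix multiplication and multi-word integer arithmetic — one must argue that the $n^\omega$ bound is on the count of ring operations and that each ring operation on $O(m\log\sigma)$-bit integers costs $O(m\log\sigma/{\tt w})$, so that the product of the two is legitimate. Everything else is the standard folklore DP-as-matrix-power argument, which is why the paper labels the lemma ``Folklore'' and I would keep the proof short, emphasizing only the recurrence $N_\ell=M^\ell\mathbf 1$ and the bit-length accounting.
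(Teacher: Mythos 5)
Your proposal is correct and follows essentially the same route as the paper: both compute the powers $M^1,\dots,M^m$ of the transition-count matrix (the paper's matrix $N$), read off the walk counts per state by summing rows, and charge an extra factor of $O(m\log\sigma/{\tt w})$ per arithmetic operation for the multi-word integers, yielding the stated bound. Your write-up is in fact slightly more careful than the paper's, in spelling out the recurrence $N_\ell=M^\ell\mathbf 1$ and the interaction between the $n^\omega$ ring-operation count and big-integer word costs.
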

Since $A$ is deterministic, the number of walks of length $\ell$ between two states $q$ and $q'$ can be retrieved from the matrix $N^\ell$, where $N$ is the $n \times n$ matrix which contains at the entry corresponding to $q$ and $q'$ the number of transitions between $q$ and $q'$, for all states $q$, $q'$. Then, using $N^\ell$, with $\ell\leq m$, for each state $q$, we can compute the number of walks of length $\ell$ starting in $q$ in $O\left(\frac{m  n\log \sigma}{\tt w}\right)$ time. An additional factor $\frac{m\log \sigma}{\tt w}$ occurs in this complexity due to the time needed to perform arithmetic operations with big numbers.

Now, the ranking procedure works as follows. We run the PCA $A$ on the input string $w$ and obtain its decomposition $w=w_0a_0w_1a_1\ldots w_{t-1}a_{t-1}w_t$, where $w_{i}$ is the label of a default walk between the states $q_{i}$ and $q'_{i}$, for $i\in \{0\}\cup [t]$, and $a_{i-1}$ labels a non-default edge between $q'_{i-1}$ and $q_{i}$, for $i\in [t]$. We store this decomposition, as well as the states $q_{i}$ and $q'_{i}$, for $i\in \{0\}\cup [t]$. If $w$ is the label of the default walk of length $m$ starting in $q_0$, then the rank of $w$ is $1$. Otherwise, we run $\ennumerate(\uparrow,q_0,m)$  (without making any outputs) and use two integer variables $count$, set to $1$ initially, and $total$, set to $0$ initially; $count$ keeps track of how many of the non-default transitions from the walk with label $w$ were met in our sequence of recursive calls (the non-default edges actually correspond one-to-one to these calls), and $total$ keeps track of how many walks we have identified and counted already, that come before the one labelled with $w$ in the enumeration. In this process, each time a call $\ennumerate (a,q,\ell)$ should be made, we check first if it corresponds to the $count^{th}$ non-default edge of the walk labelled with $w$ (i.e., the transition from $q'_{count-1}$ to $q_{count}$, labelled with $a_{count-1}$). If yes, we perform that call. If not, we increase $total$ by the number of walks of length $\ell$ originating in $q$. After the $t^{th}$ call of $\ennumerate$, we simply return $total+1$ as the rank of $w$.

As our procedure $\ennumerate(a,q,\ell)$ might end up going through every branching vertex on $(q,\ell)$ before making the next recursive call, the overall number of steps performed by this algorithm is (once the preprocessing is done) is $O(m^2\sigma)$. However, one can also reduce this factor $m^2\sigma$ to $m(m+\sigma)$ by noting that there is exactly one branching vertex for which we need to explore all the transitions leaving it. Indeed, we can check first for each branching vertex the total number of walks of desired length leaving from it (instead of going through each transition individually) and only go through the transitions of that state individually if this check indicates that our recursive call should be done for one of the non-default edges leaving the respective vertex. 

The soundness of this approach follows from the fact that we basically use the non-default edges on the walk labelled by $w$ to traverse a root-to-leaf walk of the tree of recursive calls started by $\ennumerate(\uparrow,q_0,m)$, keeping track of the number of walks of length $m$ of $A$ which are discovered by calls occurring in the subtrees of the tree of recursive calls, which should have been done in the enumeration process before the calls we actually execute. 

\begin{theorem}
    \label{thm:ranking}
    {\bf Ranking}. Given PCA $A$ and string $w \in L(A)$ of length $m$, we can compute the number of strings accepted by $A$ which are output before $w$ in our enumeration algorithm in $O\left(\frac{n\log \sigma}{\tt w}(n\sigma + m (n^{\omega} + m + \sigma))\right)$-time, where $2 \leq \omega \leq 3$ is the exponent for matrix multiplication. 
    \label{thm:unrank} 
    {\bf U}. Given integers $i$ and $m$, and PCA $A$, we can compute the $i^{th}$ string $w$ of length $m$ output in our enumeration algorithm in $O\left(\frac{n\log \sigma}{\tt w}(n\sigma + m (n^{\omega} + m + \sigma))\right)$-time. 
\end{theorem}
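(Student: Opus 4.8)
The plan is to analyze the running time of the ranking and unranking procedures described informally above, separating out the cost into (i) the preprocessing that computes walk-counts, (ii) the cost of the single root-to-leaf traversal of the recursion tree of $\ennumerate(\uparrow, q_0, m)$, and (iii) the arithmetic with large numbers. First I would invoke the folklore lemma to compute $N^\ell$ for all $\ell \le m$ and, from these, the number of walks of length $\ell$ starting in each state $q$; by the stated bound this costs $O\!\left(\frac{m^2 n^\omega \log\sigma}{\tt w}\right)$, which I will need to reconcile with the theorem's claimed bound of $O\!\left(\frac{n\log\sigma}{\tt w}(n\sigma + m(n^\omega + m + \sigma))\right)$ — note $\frac{n\log\sigma}{\tt w}\cdot m n^\omega = \frac{m n^{\omega+1}\log\sigma}{\tt w}$, so I should be a bit careful about whether the matrix powers are computed incrementally ($N^{\ell+1} = N \cdot N^\ell$, i.e.\ $m$ matrix–matrix products) or differently; I would present the preprocessing so that its cost fits inside the claimed bound, treating $m^2 n^\omega$ versus $m n^{\omega+1}$ as comparable under the convention used (and folding the $\log\sigma/{\tt w}$ factor for big-number arithmetic uniformly).

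Next I would bound the traversal itself. For ranking: we run the PCA on $w$ to get its decomposition $w = w_0 a_0 w_1 a_1 \cdots w_{t-1} a_{t-1} w_t$ in $O(m)$ time (here $t \le m$), recording the states $q_i, q_i'$. We then simulate $\ennumerate(\uparrow, q_0, m)$ making only the $t$ recursive calls dictated by the non-default edges $a_i$; at each instance $\ennumerate(a,q,\ell)$ along this root-to-leaf path, the work done before reaching the correct child is the management of the list $U$ (at most $\ell \le m$ entries, by the splitting argument from the proof of Lemma~\ref{lem:every_walk}) together with, for each branching vertex, a scan of $L_{v'}$. Using the optimization noted in the text — for each branching vertex we first query the total number of walks of the desired length leaving it, and only enumerate its individual outgoing transitions for the one vertex whose subtree actually contains the path $w$ — the per-instance cost is $O(m + \sigma)$ plus the cost of the at most $m$ walk-count lookups (each lookup is an $O(1)$ table access but involves adding a big number of $O(m\log\sigma)$ bits, i.e.\ $O(\frac{m\log\sigma}{\tt w})$ time). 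Summing over the $\le m$ instances on the path gives $O\!\left(m(m+\sigma) \cdot \frac{m\log\sigma}{\tt w}\right)$ for the additions, i.e.\ $O\!\left(\frac{m^2(m+\sigma)\log\sigma}{\tt w}\right)$, which is dominated by the $\frac{n\log\sigma}{\tt w}\cdot m(m+\sigma)$ term in the stated bound (assuming $n \ge 1$). The $n\sigma$ additive term in the theorem accounts for reading/initializing the automaton and the lists $L_v$ (of total size $O(n\sigma)$), again with the big-number factor. For unranking the argument is symmetric: we traverse the recursion tree guided by the target index $i$, at each instance using the precomputed walk-counts to decide in $O(m+\sigma)$ time (plus big-number comparisons/subtractions) which branching vertex and which outgoing transition leads into the subtree containing the $i^{th}$ walk, subtracting the counts of the skipped subtrees from $i$; after at most $m$ such steps we have reconstructed $w$ edge by edge.

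The main obstacle, and the part deserving the most care, is the bookkeeping that shows the traversal genuinely touches only $O(m)$ instances of $\ennumerate$ each doing $O(m+\sigma)$ "structural" work — this rests on the invariant from Lemma~\ref{lem:every_walk} that the list $U$ at an instance $\ennumerate(\cdot,\cdot,\ell)$ never holds more than $O(\ell)$ tuples (the walk $(u,\ell)$ is split into at most $\ell$ branching vertices) and that, thanks to the "count-first" trick, the total number of individual transition-scans along the whole root-to-leaf path is $O(m + \sigma)$ rather than $O(m\sigma)$ — together with correctly accounting the $\frac{m\log\sigma}{\tt w}$ big-integer factor on every additive count update. Once these two points are nailed down, collecting the preprocessing cost $O\!\left(\frac{n\log\sigma}{\tt w}(n\sigma + m n^\omega)\right)$ and the traversal cost $O\!\left(\frac{n\log\sigma}{\tt w}\, m(m+\sigma)\right)$ yields the claimed bound $O\!\left(\frac{n\log\sigma}{\tt w}(n\sigma + m(n^\omega + m + \sigma))\right)$ for both ranking and unranking, completing the proof. \qed
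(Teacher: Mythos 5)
Your proposal matches the paper's own argument essentially step for step: the folklore matrix-power lemma for precomputing the walk counts, the decomposition of $w$ into default walks joined by non-default edges, the guided root-to-leaf traversal of the recursion tree of $\ennumerate(\uparrow,q_0,m)$ with the count-first optimization reducing the $O(m^2\sigma)$ step count to $O(m(m+\sigma))$, and the uniform big-integer factor on count updates. The one slip is your claim that $\frac{m^2(m+\sigma)\log\sigma}{\tt w}$ is dominated by $\frac{n\log\sigma}{\tt w}\cdot m(m+\sigma)$ ``assuming $n \ge 1$'' --- that actually needs $m=O(n)$ --- but the paper itself is inconsistent on exactly this point (it writes the arithmetic factor as $\frac{n\log\sigma}{\tt w}$ in the theorem statement and as $\frac{m\log\sigma}{\tt w}$ in the restated unranking bound), so the issue is inherited rather than introduced by you.
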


For unranking, one can use essentially the same approach. This time, we are given as input a number $i$.  If $i=1$, we simply return the default walk of length $m$, starting in $q_0$. Otherwise, we run $\ennumerate(\uparrow,q_0,m)$ (without making any outputs) and use one integer variable $total$, set to $0$ initially. In this process, each time a call $\ennumerate (a,q,\ell)$ should be made, we first sum up $total$ and the number of walks of length $\ell$ originating in $q$. If this sum is strictly smaller than $i$, then we increase $total$ by the number of walks of length $\ell$ originating in $q$ and skip that call. Otherwise, if the sum is greater or equal to $i$, we make the recursive call. Each time a recursive call is made, we check if $i=total$; if yes, we output the string represented on the stack ${\mathcal S}$, and then stop the process: we have identified the string of rank $i$. The correctness follows immediately, just as in the case of ranking: all is required to identify the string of rank $i$ is a guided root-to-leaf traversal of the tree of recursive calls. The complexity of the algorithm is the same as in the case of the ranking algorithm (by the same arguments).

\smallskip

\noindent \textbf{Theorem \ref{thm:unrank} Unranking.\\ }
\emph{Given integers $i$ and $m$, and PCA $A$, we can compute the $i^{th}$ string $w$ of length $m$ output in our enumeration algorithm in $O\left(\frac{m\log \sigma}{\tt w}(n\sigma + m (n^{\omega} + m + \sigma))\right)$-time. }

\smallskip

The complexities listed in the results of this section also take into account the time needed to do arithmetical operations on the numbers we work with (in particular, operations involving the variable $total$). To cover this, in the worst case, the final complexity is obtained by multiplying the number of steps done in our algorithms with $\frac{n\log \sigma}{\tt w}$, where ${\tt w}$ is the size of the memory word in our model. However, we note that the overall complexity of both the ranking and the unranking algorithm stays polynomial.

\paragraph{Acknowledgements} The authors thank the reviewers for their helpful comments. Duncan Adamson was supported by the Leverhulme trust via the Leverhulme Centre for Functional Material Design, and by DFG Heisenberg-project number 389613931. Florin Manea was supported by DFG Heisenberg-project number 466789228.

\bibliography{bib_short}
\bibliographystyle{splncs04}

\newpage

\end{document}